\newcommand{\bbP}{\mathbb{P}}
\newcommand{\mh}{\hat{\mu}}
\newtheorem{proposition}{Proposition}
\newtheorem{lemma}{Lemma}
\newtheorem{claim}{Claim}
\newtheorem{theorem}{Theorem}
\theoremstyle{definition}
\newtheorem{remark}{Remark}
\newtheorem{definition}{Definition}
\DeclarePairedDelimiter\ceil{\lceil}{\rceil}
\def\BibTeX{{\rm B\kern-.05em{\sc i\kern-.025em b}\kern-.08em
    T\kern-.1667em\lower.7ex\hbox{E}\kern-.125emX}}
\begin{document}
\title{Optimal Learning for Dynamic Coding in Deadline-Constrained Multi-Channel Networks}
\author{Semih Cayci, \IEEEmembership{Student Member, IEEE} and Atilla Eryilmaz, \IEEEmembership{Senior Member, IEEE}
\thanks{This work is funded by the NSF grants: CCSS-EARS-1444026, CNS-NeTS-1514127, CMMI-SMOR-1562065 and CNS-WiFiUS-1456806, CNS-ICN-WEN-1719371; the DTRA grant HDTRA1-15-1-0003; and the QNRF Grant NPRP 7-923-2-344.}
\thanks{Semih Cayci and Atilla Eryilmaz are with the Ohio State University, Columbus, OH 43210 USA (e-mail: \{cayci.1, eryilmaz.2\}@osu.edu). }}

\maketitle

\begin{abstract}
We study the problem of serving randomly arriving and delay-sensitive traffic over a multi-channel communication system with time-varying channel states and unknown statistics. This problem deviates from the classical exploration-exploitation setting in that the design and analysis must accommodate the dynamics of packet availability and urgency as well as the cost of each channel use at the time of decision. To that end, we have developed and investigated an index-based policy UCB-Deadline, which performs dynamic channel allocation decisions that incorporate these traffic requirements and costs. Under symmetric channel conditions, we have proved that the UCB-Deadline policy can achieve bounded regret in the likely case where the cost of using a channel is not too high to prevent all transmissions, and logarithmic regret otherwise. In this case, we show that UCB-Deadline is order-optimal. We also perform numerical investigations to validate the theoretical findings, and also compare the performance of the UCB-Deadline to another learning algorithm that we propose based on Thompson Sampling.
\end{abstract}

\begin{IEEEkeywords}
Machine learning, control of communication systems, stochastic optimal control, resource allocation, reinforcement learning, exploration-and-exploitation tradeoff, multi-armed bandits.
\end{IEEEkeywords}

\section{Introduction}
\label{sec:introduction}
\IEEEPARstart{W}{ith} the advances in wireless communications, next generation communication networks are expected to serve real-time applications that require end-to-end deadline constraints and a large amount of throughput over fading channels. Especially real-time multimedia applications such as voice and video streaming possess stringent deadline constraints that require particular emphasis. The ultra-wideband communication channels that are designed to meet these requirements, such as millimeter-wave (mmW) channels, have highly intermittent dynamics, which makes existing channel probing and estimation techniques inapplicable. Therefore, it is crucial to develop new communication schemes that can handle applications with deadline constraints and large throughput demands in the absence of channel statistics and channel state information.

In wireless communication schemes such as IEEE 802.11 and 5G millimeter-wave (mmW) cellular systems, availability of multiple orthogonal channels enables a user to simultaneously utilize multiple channels to increase the quality of communication in various aspects \cite{Vaidya, Rappaport}. In \cite{Vaidya}, it is shown that multi-channel operation provides significant increase in network capacity, which can be exploited to meet the increasing demand for throughput. In mmW cellular communications, multi-channel scenario is expected to overcome the intermittence problem of mmW channels due to blockage, which particularly hinders applications with quality of service (QoS) requirements \cite{Rappaport, Cayci}. As it is possible to equip a single node with multiple radio interfaces due to the reduced hardware costs, multi-channel communication scheme offers a feasible solution to serve applications with deadline constraints and large throughput demand \cite{Vaidya, Cost}. On the other hand, operational costs, such as power consumption, impose a critical constraint in the number of active interfaces. Thus, it is important to activate a plausible number of channels dynamically depending on queue-length and deadline constraints so as to increase throughput while keeping the operational costs at acceptable levels.

In conventional communication systems, there are efficient channel estimation techniques that provide channel state information (CSI) for rate and power allocation policies \cite{Tse}. However, these methods are inapplicable in millimeter-wave communication systems as the channels are highly intermittent and fast-varying \cite{Rappaport, Rangan}. This necessitates the development of online learning algorithms that rely on channel feedback in the absence of channel state information and channel statistics.

In this paper, we investigate the problem of dynamic channel allocation for a single user in a multi-channel network with deadline constraints and service costs in the absence of channel statistics and CSI. Our main contribution is an online learning algorithm that converges to the optimal solutions with small regret by using only the channel feedback. In traditional communication systems, efficient rate and power allocation schemes that base the decisions on CSI and queue-lengths exist \cite{NeelyModiano, EryilmazPerkins, NeelyGeorgiadis, Shroff, Gangammanavar}. However, these methods are built on the key assumption that CSI is available at the time of decision, therefore they are not applicable in the emerging communication scenarios where CSI and channel statistics are unknown. There is an interesting body of work which considers the online learning problem for rate allocation based on success/fail feedback \cite{Agarwal1, Agarwal2}. These works do not apply to our context since they do not provide short-term performance guarantees, such as regret.

There is a large body of work in the design and analysis of online learning algorithms that optimize short-term performance in the context of multi-armed bandits (MAB) \cite{Bubeck2012, Tekin}. Our work deviates from the context of classical stochastic bandits as the revenue of the activated arms are coupled, the controller has the incentive to activate no channels due to the cost, and there is a strong dependence on the queue-length. In \cite{Shakkotai}, learning problem is investigated with a regret definition based on queueing-delay. This work does not apply to our setting as it does not consider deadline-constrained traffic. A preliminary version of this work was presented in \cite{Cayci_Learning}, where the deadline constraint is fixed at one time-slot and the throughput is defined as the number of successfully transmitted packets. In this paper, we generalize the results to any deadline under an erasure coding scheme.

\section{Dynamic Channel Encoding-Decoding and Learning Problem}
\label{sec:sys_mod}
We consider a discrete-time multi-timescaled system consisting of frames and time-slots. Time-slot is the smallest time unit in this framework in which channel variations occur, and each frame consists of $T\geq 1$ time-slots.

\textbf{Channel Process:} We study a multi-channel system in which the packets can be transmitted by $K$ (possibly infinite) independent fading channels. In frame $n$, the rate $C_k^n(t)$ of channel $k$ evolves in each time-slot according to an iid Bernoulli process with mean $\mu$, i.e., $C_k^n(t)\overset{iid}{\sim}Ber(\mu^*)$ for $k=1,2,\ldots,K$. This Bernoulli channel model reflects the sharp difference between line-of-sight (LOS) and non-line-of-sight channel (NLOS) states in millimeter-wave communications \cite{Rappaport, Rangan}. $C_k^n(t)$ is revealed via ACK or NACK signals after the transmission only if channel $k$ is activated at time $t$. In the learning problem, $\mu^*$ is not known a priori, and learned over time by using feedback. We assume that the belief about $\mu^*$ is updated only at the beginning of each frame.

\textbf{Arrival Process:} The packets arrive into the system only at the beginning of each frame according to an arrival process $A(n)$ which is independent and identically distributed (iid) over a finite set $\mathcal{A}=\{0,1,\ldots,A_{max}\}$ with probability distribution $\bbP(A(n)=a)=\alpha_a$ for $a\in\mathcal{A}$ at frame $n$. The packets have a lifetime of one frame, i.e. $T$ time-slots, and will be lost if they are not served within that interval.

The overall problem consists of two subproblems with different time-scales: a fast timescale problem of rate allocation and a slow timescale problem of learning. The fast timescale system is concerned about encoder-decoder couple selection at each time-slot within a frame given a belief on the channel statistics. The slow timescale system focuses on the learning part, and the goal is to update the belief on the channel statistics using the feedback so as to maximize performance. The overall system model is illustrated in Figure \ref{fig:sys_scheme}.
\begin{figure}[htb]
\centering
\includegraphics[scale=0.25]{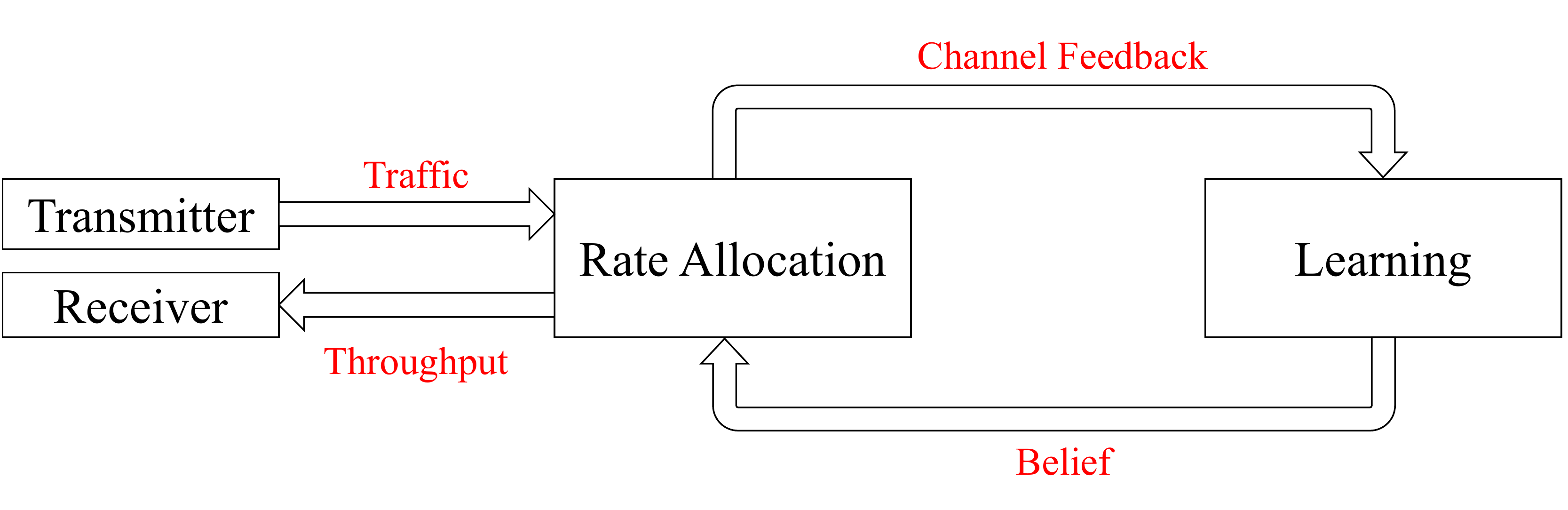}
\caption{The overall system model. The rate allocation policy provides the ACK/NACK feedback to the learning policy, and receives the belief about channel statistics in return.}
\label{fig:sys_scheme}
\end{figure}

In the next subsection, we investigate the dynamic channel encoding-decoding problem that takes place in the fast timescale.

\subsection{Dynamic Channel Encoding-Decoding Problem}
In rate allocation, we focus on a single frame $n$ given a channel rate estimate $\mu$. At each stage $s\in\{1,2,\ldots,T\}$, a centralized controller selects $x_s$ packets from the queue, and transmits them over $m_s$ channels by choosing an encoder-decoder pair of an $(m_s, x_s)$ code. The system is illustrated in Figure \ref{fig:system_model}.
\begin{figure}[htb]
\centering
\includegraphics[scale=0.34]{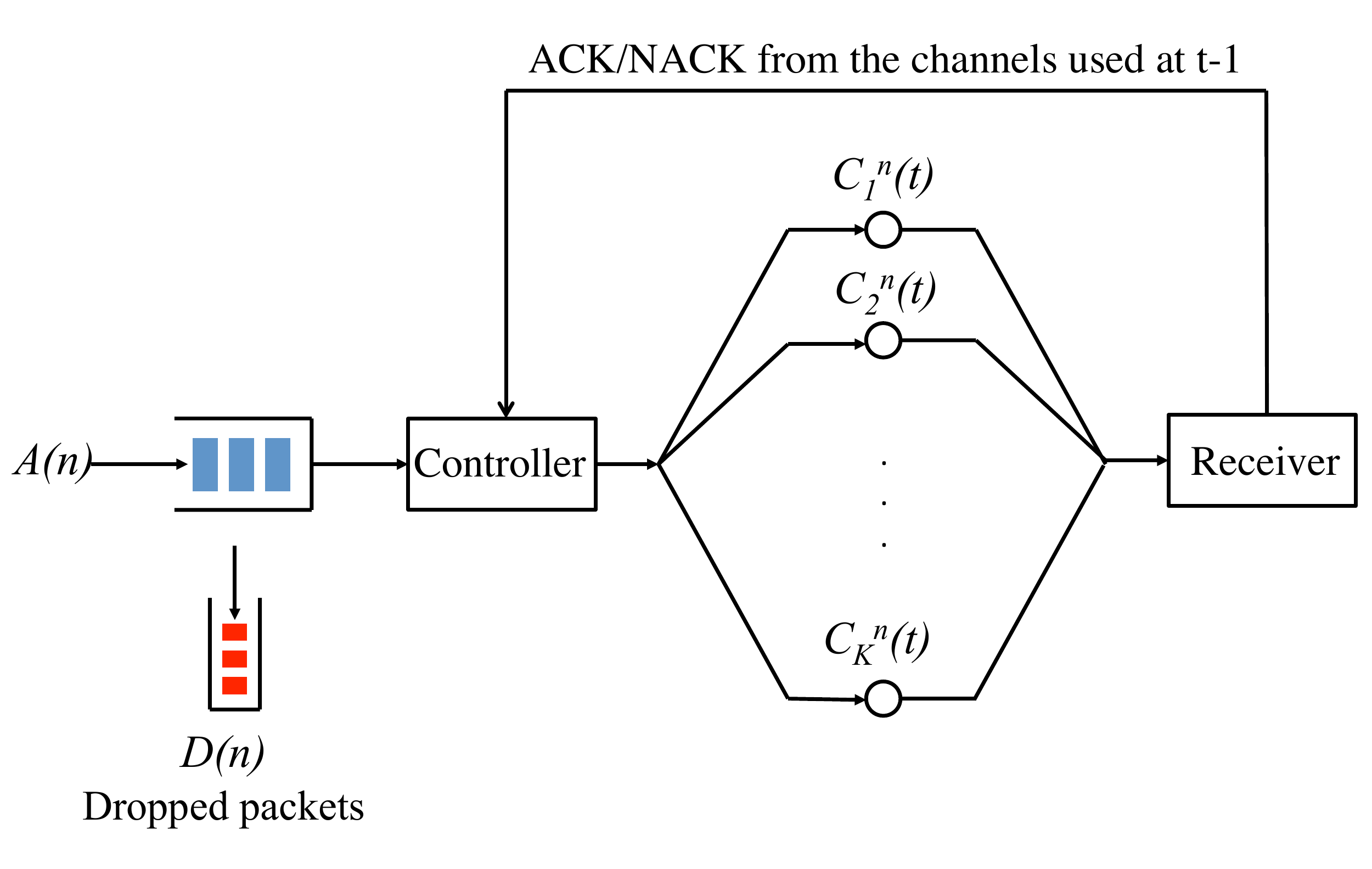}
\caption{The multi-channel network with symmetric Bernoulli channels. At the end of each transmission, CSI of each activated channel is revealed to the controller via ACK/NACK signals.}
\label{fig:system_model}
\end{figure}

Each channel use incurs a constant cost of $d\in[0,1]$, which measures the operational costs associated with each channel use, such as power. It is assumed that $d$ is known by the controller. If $m$ channels are activated and there are $x$ packets scheduled for transmission at time-slot $t$, the throughput is denoted by $\tau_{(m, x)}(t)$, and it is a function of the states of the activated channels. With these definitions, the problem corresponds to choosing an encoder-decoder couple for an $(m, x)$ code to maximize the total revenue in a frame. Let $V_{(m, x)}(t) = \tau_{(m, x)}(t)-d\cdot m$ be the revenue at time-slot $t$. The process is a controlled Markov chain with controls $(m_s,x_s)$ at stage $s$, and state $X_s$, which denotes the number of remaining packets at the beginning of stage $s$. The state transition occurs in the following sense:
\begin{equation*}
X_{s-1} = X_s - \tau_{(m_s,x_s)}(T-s+1).
\end{equation*}
 Assuming $A(n) = a$, the total revenue under a policy $\pi_a$ is as follows:
\begin{equation}
\label{eqn:value_function}
		J(a, \pi_a) = \sum_{t=1}^T V_{(m_t,x_t)}(t) + J_0(a-\sum_{t=1}^T \tau_{(m_t,x_t)}(t) ),
\end{equation}


\noindent where $\pi_a=(\pi_a^1,\pi_a^2,\ldots,\pi_a^T)$ is a rate allocation policy and $J_0:\mathcal{A}\rightarrow\mathbb{R}^-$ is a negative and monotonically decreasing penalty function for untransmitted packets. In this work, we will consider a linear penalty function $J_0(X) = -\lambda X$ for some $\lambda \geq 0$. This corresponds to penalizing each untransmitted packet by $\lambda$. 

Let $\mathcal{U}_n(t)$ denote the set of channels activated at time-slot $t$ in frame $n$ and $h_t = [(m_s, x_s, \mathcal{U}_{n}(s))]_{s=1}^{t-1}$ be the history up to time $t$. An admissible policy $\pi$ chooses an $(m_t,x_t)$ code based on the history $h_t$ at time-slot $t$, i.e., $\pi_a^t:h_t\mapsto (m_t,x_t)$. Here, in the absence of the knowledge of $\mu^*$, the rate allocation policy makes a decision based on a given belief $\mu$ about $\mu^*$, i.e., $\pi_a = \pi_a(\mu)$. Our goal is to find the admissible policy that achieves the maximum total revenue based on the current belief $\mu$ given $A(n) = a$, denoted by $\pi_a^*(\mu)$, which is the solution to the following optimization problem:
\begin{equation}
\label{eqn:opt_rate_alloc}
\underset{\pi_a}{\max}~\mathbb{E}_{\mu}^{\pi_a}\Big[ J(a, \pi_a) \Big] ,
\end{equation}

\noindent where the expectation is taken with respect to $\mu$. In next section, we will provide an optimal solution to the rate allocation problem by using dynamic programming.

Throughout this paper, we will consider a  specific communication scenario that uses a near-optimal erasure coding scheme with the following throughput function:
\begin{equation}
\tau_{(m_t,x_t)}(t) = x_t\cdot \mathbb{I}_{ \{ S_{m_t}(t) \geq x_t \} }.
\end{equation}
\noindent where $S_m(t) = C_1^n(t)+C_2^n(t)+\ldots+C_m^n(t)$ is the number of connected channels when $m$ channels are activated. This communication scenario applies to a broad variety of wireless and optical communication scenarios as well as storage applications \cite{Costello}.

%


%

\subsection{Learning Problem}

In rate allocation, we assumed a given belief $\mu$ about the channel parameter $\mu^*$. Remember that we do not have an a priori knowledge about $\mu^*$ at the beginning, and we learn the channel statistics by using the channel feedback. In order to maximize the revenue over time, it is required to have a reliable estimate on $\mu^*$ for the rate allocation policy, which necessitates a learning policy that leads to a fast convergence.

Let $\bar{\mu}_n$ be an estimator of $\mu^*$. We say that $\bar{\mu}_n$ is admissible if it is based on the knowledge of activated channel realizations until and excluding frame $n$, and arrivals until and including $n$:
\begin{multline*}
\mathbb{I}_{\{\bar{\mu}_n = \mu \}}\in\sigma(\{C_i^k(s): i \in \mathcal{U}_k(s), s\leq T, k < n\}, \\ \{A(k):k\leq n\}),
\label{eqn:admissibility}
\end{multline*}
\noindent for all $\mu\in  [0,1]$, where $\mathbb{I}$ is the indicator function, $\mathcal{U}_n(t)$ denotes the set of channels activated at time-slot $t$ in frame $n$ and $\sigma(\{X_i\}_{i=1}^J)$ denotes the $\sigma$-field generated by a collection of random variables $X_i,~i=1,2,\ldots, J$.

Therefore, we can define an admissible joint learning and rate allocation policy as follows:
\begin{equation}
\pi_{A(n)}(\bar{\mu}_n)=\sum_{a\in\mathcal{A}}\mathbb{I}_{\{ A(n)=a \}}\pi_a(\bar{\mu}_n),
\end{equation}

\noindent where $\bar{\mu}_n$ is an admissible estimator and $\pi_a$ is an admissible rate allocation policy.

%
%
%

\subsection{Regret of an Admissible Policy}
Recall that if a genie reveals the mean $\mu^*$ to the controller, the optimal rate allocation policy that maximizes the revenue given $A(n) = a$ is $\pi^*_a(\mu^*)$. As the a priori knowledge of $\mu$ is absent, an algorithm has to learn the mean, and maximize the revenue simultaneously. Pseudo-regret, which will be simply referred to as regret throughout this paper, is a common measure to evaluate the performance of learning algorithms \cite{Bubeck2012, CesaBianchi, Tekin}. The regret under an admissible policy $\pi_{A(n)}(\bar{\mu}_n)$ for a horizon $N$ is defined as follows:
\begin{equation*}
\bar{R}_N =\mathbb{E} \sum_{n=1}^N \sum_{a\in\mathcal{A}} \mathbb{I}_{\{A(n)=a\}} ( J(a, \pi_a^*(\mu^*))-J(a, \pi_a(\bar{\mu}_n)) ).
\end{equation*}

\noindent In words, regret is defined as the cumulative difference between the maximum expected revenue given the mean $\mu^*$ and the expected revenue under policy $\pi$ in $N$ frames.

By using the admissibility of the policy $\pi$, the regret can be found as follows:
\begin{equation}
 \label{eqn:regret}
\bar{R}_N = \sum_{a\in\mathcal{A}} \alpha_a \cdot \mathbb{E}_{\mu^*} \Big[\sum_{n=1}^N  \Big ( J(a, \pi_a^*(\mu^*))-J(a, \pi_a(\bar{\mu}_n))\Big ) \Big].
\end{equation}

The objective in this paper is to design policies that provide low regret. In the following section, we first investigate the optimal rate allocation policy along with its characteristics, and then propose learning algorithms that provably achieve low regret.

\section{Properties of Optimal Rate Allocation Policy}
\label{sec:prop}
In this section, we will investigate the optimal rate allocation policy and provide some important characteristics under a specific channel assumption. We first describe a method based on dynamic programming to find the optimal rate allocation policy.

\subsection{Rate Allocation as a Dynamic Programming Problem}

Assume that the the controller has the belief $\mu$ for the channel mean. Let $P_{m,x}(\mu) = \mathbb{P}(S_m^n(T-s+1) \geq x)$ be the probability of successful transmission where $S_m^n(t)$ is the total number of connected channels to transmit $x$ packets over $m$ channels. Then, the optimization problem in frame $n$ in (\ref{eqn:opt_rate_alloc}) can be recast as Bellman-Ford recursions as follows:

\begin{multline}
\label{eqn:dp}
J_s(X_s) = \underset{x\leq X_s, m}{\max}\{-dm + P_{m,x}(\mu)\cdot x\\+P_{m, x}(\mu)\cdot J_{s-1}(X-x) + (1-P_{m,x}(\mu))\cdot J_{s-1}(X)\},
\end{multline}
\noindent where the terminal cost $J_0$ is the penalty function for untransmitted packets as before, the state variable $X_s$ denotes the number of remaining packets at the beginning of stage $s$, $X_T=A(n)$.

Therefore, the optimal rate allocation policy can be stated in the following way \cite{Whittle}:
\begin{multline}
(m_s, x_s) = \underset{(m,x):x\leq X_s}{\arg\max} \{-dm \\+ P_{m,x}(\mu)\Big(x+J_{s-1}(X_s-x)-J_{s-1}(X_s)\Big) \},
\end{multline}
for $s=1, 2,\ldots, T$ with the initial condition $J_0(X) = -\lambda X$ for all $X\in \mathcal{A}$.

\begin{remark}
Note that the value function in (\ref{eqn:dp}) is not the actual value function as it assumes that the channel mean is the belief $\mu$, which is generally different from the actual mean $\mu^*$. The average performance of a rate allocation policy $(m_t,x_t)$ at frame $n$ given a belief $\bar{\mu}_n$ becomes as follows:
\begin{equation}
\mathbb{E}[J(A(n), \pi_{A(n)}(\bar{\mu}_n)] = \sum_{a\in\mathcal{A}} \alpha_a \mathbb{E}_{\mu^*}[J(a, \pi_{a}(\bar{\mu}_n))],
\end{equation} 
\noindent where $J$ is defined in (\ref{eqn:value_function}).
\end{remark}

\subsection{Characteristics of the Optimal Policy}
We now investigate some important characteristics of the optimal rate allocation policy that will be crucial in the performance analysis of the learning policy. Throughout this discussion, we assume that $\mu^*$ is provided by a genie.

\begin{proposition}[Critical point]
\label{prop:cp}
Under optimal rate allocation policy, there exists a critical $\zeta \in [\frac{d}{1+\lambda}, \frac{2d}{1+\lambda}]$ such that $\sum_{t=1}^T m_t > 0$ if and only if $\mu^* \geq \zeta$.
\end{proposition}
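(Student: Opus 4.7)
The argument splits the range into a lower bound $\zeta \geq d/(1+\lambda)$ and an upper bound $\zeta \leq 2d/(1+\lambda)$, each obtained by inspecting the Bellman recursion \eqref{eqn:dp} with the linear terminal cost $J_0(X) = -\lambda X$. Throughout I assume $A(n) \geq 1$, since $A(n) = 0$ gives $\sum_t m_t = 0$ trivially.

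For the lower bound, I would show by backward induction on $s = 0, 1, \ldots, T$ that if $\mu^* < d/(1+\lambda)$, then $J_s(X) = -\lambda X$ for all $X$ and the maximizer in \eqref{eqn:dp} is $(0,0)$. The base case $s = 0$ is the terminal condition. Assuming the hypothesis at stage $s-1$ and substituting it into \eqref{eqn:dp}, the recursion collapses to
\begin{equation*}
J_s(X) = -\lambda X + \max_{(m,x):\, x \leq X}\!\bigl\{-dm + (1+\lambda)\,x\, P_{m,x}(\mu^*)\bigr\}.
\end{equation*}
Because $S_m \sim \mathrm{Binomial}(m,\mu^*)$, Markov's inequality gives $P_{m,x}(\mu^*) \leq m\mu^*/x$, so the bracketed term is bounded by $m(\mu^*(1+\lambda) - d)$, which is strictly negative for every $m \geq 1$ under the hypothesis. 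The maximum is therefore attained at $(0,0)$ with value $0$, closing the induction and forcing $\sum_{t=1}^T m_t = 0$.

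For the upper bound, suppose $\mu^* \geq 2d/(1+\lambda)$. Either activation already occurs at some stage $s \geq 2$ and $\sum_t m_t > 0$, or every prior stage picks $(0,0)$, so that $X_1 = A(n) \geq 1$ at the final stage. In the latter case, the feasible action $(1,1)$ has extra revenue over $(0,0)$ equal to
\begin{equation*}
-d + \mu^*\bigl(1 + J_0(A(n)-1) - J_0(A(n))\bigr) = -d + \mu^*(1+\lambda) \geq d > 0,
\end{equation*}
so $m_1 \geq 1$ and again $\sum_t m_t > 0$. The same final-stage inspection in fact shows activation whenever $\mu^* > d/(1+\lambda)$, so the activation set is an upper half-line whose boundary lies in $[d/(1+\lambda), 2d/(1+\lambda)]$, yielding the stated \emph{iff} characterisation. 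The main obstacle is the lower-bound induction, where one must dominate \emph{every} coded pair $(m,x)$ by inaction rather than only the uncoded $(1,1)$; the uniform Markov bound $P_{m,x}(\mu^*) \leq m\mu^*/x$ is the device that makes the recursion close in the linear form $J_s(X) = -\lambda X$.
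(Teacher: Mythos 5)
Your proof is correct, and its activation half takes a genuinely different (and sharper) route than the paper's. The idle-direction induction is essentially the paper's own argument: Lemma~\ref{lemma:cont} shows an idle decision propagates precisely because the value function stays equal to $-\lambda X$, and Lemma~\ref{lemma:idle} applies the same Markov bound $P_{m,x}(\mu^*)\le m\mu^*/x$ to dominate every pair with $m\ge 1$ by inaction. For the activation direction, however, the paper tests the coded pair with $x=m\mu^*$ and invokes the fact that $m\mu^*$ is a median of the $\mathrm{Binomial}(m,\mu^*)$ distribution, so $P_{m,x}(\mu^*)\ge 1/2$ and the marginal revenue is nonnegative once $\mu^*\ge 2d/(1+\lambda)$ (Lemma~\ref{lemma:nonidle}); it then still needs a separate connectedness lemma for the idle set $\Lambda$ (via monotonicity of $P_{m,x}(\mu)$ in $\mu$, Lemma~\ref{lemma:conn}) plus continuity to conclude that a single threshold exists somewhere in $[\frac{d}{1+\lambda},\frac{2d}{1+\lambda}]$. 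You instead test the uncoded action $(1,1)$ at the last stage, whose success probability is exactly $\mu^*$, which yields activation already for $\mu^*>\frac{d}{1+\lambda}$; this collapses the interval to a point, renders the connectedness lemma unnecessary, and in fact shows $\zeta=\frac{d}{1+\lambda}$ exactly --- consistent with the paper's own delay-tolerant case study, where that value is stated explicitly. What the paper's median argument buys is robustness to variants where a trivial $(1,1)$ code is not available (e.g.\ a minimum block length); what yours buys is a tight threshold with less machinery. Two points worth making explicit: (i) the step ``otherwise $X_1=A(n)$'' uses that an idle slot serves no packets, so the state is unchanged; and (ii) at the boundary $\mu^*=\frac{d}{1+\lambda}$ the actions $(1,1)$ and $(0,0)$ tie, so whether the threshold is attained with $\ge$ or $>$ is a tie-breaking convention --- the paper is equally silent on this point, so it is not a gap relative to the stated claim.
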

\noindent Proof of Proposition \ref{prop:cp} is given in the Appendix.

In regret analysis, it is crucial to find upper bounds on instantaneous regret, i.e., mean-independent lower bounds on the expected revenue. The following proposition provides such a finite lower bound on the expected revenue.

\begin{proposition}
\label{prop:bdd}
For any $a\in\mathcal{A}$, there exists $B_a = B_a(\mu^*)>0$ such that $\mathbb{E}_{\mu^*}[J(a, \pi_a(\mu))] \geq -B_a$ for all $\mu \in (0, 1)$.
\end{proposition}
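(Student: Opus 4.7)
The plan is to establish a path-wise lower bound on $J(a,\pi_a(\mu))$ that is independent of $\mu$ (and therefore also of the randomness under the true channel mean $\mu^*$). The core observation is that although $\pi_a(\mu)$ is tuned to a possibly-wrong belief $\mu$, it can never choose to activate too many channels per slot, because each activation costs a deterministic $d$ while the marginal gain any single-slot action can produce is uniformly bounded.

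First I would show, by induction on $s$, that the belief-based value function satisfies
\[
-\lambda X \le J_s(X) \le X \qquad \text{for all } s=0,\dots,T,\ X\in\mathcal{A},\ \mu\in(0,1).
\]
The lower bound follows from the fact that the ``do nothing'' action $(0,0)$ is always available and yields $J_{s-1}(X)$, starting from $J_0(X)=-\lambda X$. The upper bound comes from $P_{m,x}(\mu)J_{s-1}(X-x)+(1-P_{m,x}(\mu))J_{s-1}(X)\le X - P_{m,x}(\mu)x$ (by the inductive hypothesis), which, substituted into the DP recursion and combined with $dm\ge 0$, gives $J_s(X)\le X$.

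Next I would use Bellman optimality to bound $m_t$. Rewriting the recursion as
\[
J_s(X_s) - J_{s-1}(X_s) = \max_{(m,x)}\Bigl\{-dm + P_{m,x}(\mu)\bigl[x + J_{s-1}(X_s-x) - J_{s-1}(X_s)\bigr]\Bigr\},
\]
the chosen $(m_t,x_t)$ must perform at least as well as $(0,0)$, so the maximand is non-negative. By the value-function bounds from the previous step, the bracket is at most $X_s(1+\lambda)\le a(1+\lambda)$, and $P_{m,x}(\mu)\le 1$; hence $m_t\le M:=a(1+\lambda)/d$ deterministically, uniformly over $\mu$, $s$, and the (random) history.

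Finally, expanding $J$ with $J_0(Y)=-\lambda Y$ yields
\[
J(a,\pi_a(\mu)) = (1+\lambda)\sum_{t=1}^T \tau_{(m_t,x_t)}(t) - d\sum_{t=1}^T m_t - \lambda a \ \ge\ -dTM - \lambda a,
\]
using $\tau_{(m_t,x_t)}(t)\ge 0$ and the uniform bound $\sum_t m_t\le TM$. Setting $B_a:=dTM+\lambda a = a\bigl[T(1+\lambda)+\lambda\bigr]$ suffices, and since the inequality is path-wise, the expectation under any $\mu^*$ trivially inherits it (in fact $B_a$ does not depend on $\mu^*$ at all). The main subtlety is that the step~1 bounds must be independent of $\mu$, which is why the argument routes everything through the deterministic do-nothing comparison rather than anything that depends on channel statistics.
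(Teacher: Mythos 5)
Your proof is correct and follows essentially the same route as the paper's: bound the number of channels any belief-based policy can activate per slot by comparing the deterministic cost $dm$ against the maximum possible marginal gain, then lower-bound the revenue path-wise by the worst case of zero throughput with maximal channel use plus the full penalty $\lambda a$. Your version is simply a more careful rendering of the paper's one-line argument — you formalize the value-function bounds by induction and correctly carry the $(1+\lambda)$ factor through the marginal gain (giving $m_t \le a(1+\lambda)/d$ rather than the paper's cruder $a/d$), but the underlying idea is identical.
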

\begin{proof}
At any time-slot, the throughput is upper bounded by $a$, therefore at most $a/d$ channels are used. In that case, a very crude lower bound would be when throughput is $0$ although all channels are used, which implies $\mathbb{E}_{\mu^*}[J(a, \pi_a(\mu))] \geq -Ta/d-\lambda a$.
\end{proof}

\begin{remark}
By the same argument as Proposition \ref{prop:bdd}, it is straightforward to show that $\sum_{t=1}^T m_t$ is upper bounded if $\mathcal{A}$ is bounded: $\sum_{t=1}^Tm_t \leq M_{max} = TA_{max}/d$.
\end{remark}

These characteristics will be fundamental in the analysis of learning algorithms that will be presented in the next section.

\subsection{Case Study: Delay-Tolerant and Delay-Intolerant Systems}
We now investigate two extreme cases that will provide important insights about the characteristics of the optimal rate allocation policy: delay-tolerant and delay-intolerant systems.

\textbf{Case I: Delay-Tolerant System}: In the first extreme case, we assume that there is at most one packet in the queue, i.e., $A_{max} =1$, and multiple time-slots for the successful transmission of that packet, i.e., $T\geq 1$. This corresponds to a delay-tolerant system where each packet has multiple time-slots for transmission.

The optimal rate allocation policy can be found by solving the Bellman-Ford recursions given in (\ref{eqn:dp}). Starting with $J_0(1) = -\lambda$, the optimal rate allocation policy can be found as follows:
\begin{equation}
\label{eqn:dt_opt}
m_t = \left\{
        \begin{array}{ll}
            0 ,&\mbox{if } \mu^* < \frac{d}{1-J_{t-1}(1)} \\
            k ,&\mbox{if }  (1-\mu^*)^{k}\mu^* < \frac{d}{1-J_{t-1}(1)} < (1-\mu^*)^{k-1}\mu^*
        \end{array}
    \right.
\end{equation}
\noindent for $k=1,2,\ldots$, where $x_t = 0$ if $m_t = 0$ and $x_t = 1$ otherwise.

\textbf{Example:} In the following, we investigate how the optimal rate allocation policy varies with different values of $\mu^*\in[0,1]$ in a specific setting. Figure \ref{fig:dl_tol} demonstrates $\{m_t:t=1,2,\ldots, T\}$ where $T = 4$, $d=0.25$ and $\lambda=1$.
\begin{figure}[htb]
\centering
\includegraphics[scale=0.4]{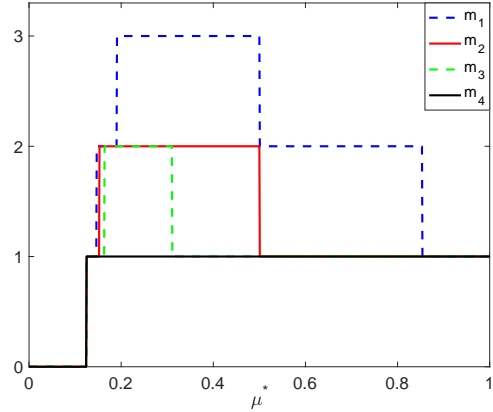}
\caption{Optimal policy as a function of $\mu$ for the case $A_{max}=1$, $d=0.25$, $T=4$ and $\lambda=1$. The channel use becomes more aggressive as the deadline approaches.}
\label{fig:dl_tol}
\end{figure}

\noindent From Figure \ref{fig:dl_tol}, we first observe that the number of activated channels increases as the deadline approaches for any $\mu^*$. Secondly, at any time-slot, the number of activated channels increases up to a certain $\mu^*$, and then monotonically decreases with the increasing $\mu^*$. This stems from the fact that an additional channel is costly when the reliability of current channels is high enough.

There are three important observations that stem from (\ref{eqn:dp}) and (\ref{eqn:dt_opt}).

\begin{proposition}
\begin{enumerate}
\item The value function increases over time: $J_t(1) \geq J_{t-1}(1)$ for $t=1,2,\ldots,T$.
\item The number of channel uses increases as the deadline approaches: $m_t \leq m_{t-1}$ for $t=1,2,\ldots,T$.
\item The critical point is $\zeta = \frac{d}{1+\lambda}$, which is the lower limit given in Prop. \ref{prop:cp}.
\end{enumerate}
\end{proposition}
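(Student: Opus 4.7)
The plan is to establish the three claims in order, exploiting the explicit optimal-policy formula (\ref{eqn:dt_opt}), in which the threshold $\tau_t := d/(1 - J_{t-1}(1))$ couples the value function to the channel-usage count and to the critical point. Each claim will be used to bootstrap the next, so the whole proposition collapses into a single monotonicity argument plus one boundary computation.

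For claim (1), I would argue directly from the Bellman recursion (\ref{eqn:dp}) at state $X_s = 1$: the null action $(m, x) = (0, 0)$ is always feasible and, because $P_{0,0}(\mu) = 1$ and the cost vanishes, it contributes exactly $J_{t-1}(1)$ to the maximand. Hence $J_t(1) \geq J_{t-1}(1)$. This is a one-line feasibility comment, not an induction.

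For claim (2), I would use (1) to deduce that $1 - J_{t-1}(1)$ is non-increasing in $t$ (note $J_s(1) < 1$ always since the single packet carries unit value and each channel use is costly, so the denominator stays positive), and therefore $\tau_t$ is non-decreasing in $t$. The thresholds $(1-\mu^*)^{k-1}\mu^*$ appearing in (\ref{eqn:dt_opt}) are strictly decreasing in $k$, so a larger $\tau_t$ must lie in an interval with a weakly smaller index $k$. Translating this back through (\ref{eqn:dt_opt}) gives $m_t \leq m_{t-1}$.

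For claim (3), iterating (1) yields $J_{t-1}(1) \geq J_0(1) = -\lambda$, hence $\tau_t \geq d/(1+\lambda)$ for every $t$. If $\mu^* < d/(1+\lambda)$, then $\mu^* < \tau_t$ for all $t$, so (\ref{eqn:dt_opt}) forces $m_t = 0$ everywhere and $\sum_t m_t = 0$. In the other direction, at the last slot $t = 1$ we have $\tau_1 = d/(1+\lambda)$ exactly; comparing the expected revenue of activating a single channel, which evaluates to $\mu^*(1+\lambda) - d - \lambda$, against the do-nothing revenue $-\lambda$, shows that $m_1 \geq 1$ is optimal as soon as $\mu^* \geq d/(1+\lambda)$. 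Combined with Proposition \ref{prop:cp}, this pins down $\zeta = d/(1+\lambda)$. The only subtlety is the boundary $\mu^* = d/(1+\lambda)$, where (\ref{eqn:dt_opt}) is written with strict inequalities; but the explicit revenue comparison at the last slot handles this cleanly, and Proposition \ref{prop:cp} already states $\zeta$ as an element of a closed interval, so no step is a real obstacle.
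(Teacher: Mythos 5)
Your proof is correct and, for claims (1) and (2), follows essentially the same route as the paper: the null action makes $J_t(1)$ non-decreasing, and the monotone coupling of the threshold $d/(1-J_{t-1}(1))$ with the decreasing sequence $(1-\mu^*)^{k-1}\mu^*$ in (\ref{eqn:dt_opt}) gives $m_t \leq m_{t-1}$. The paper's proof actually omits claim (3) entirely, whereas your last-slot revenue comparison $\mu^*(1+\lambda)-d-\lambda \geq -\lambda$ supplies the missing argument cleanly, so your write-up is, if anything, more complete than the original.
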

\begin{proof}
\noindent The first claim follows since $m_t = 0,~\forall t$ is among the possible decisions. The second claim follows from (\ref{eqn:dt_opt}), which says that $m_t$ is monotonically decreasing with $J_t$, which is monotonically increasing with $t$.
\end{proof}

\textbf{Case II: Delay-Intolerant System}: In the delay-intolerant case, there are bursty arrivals with $A_{max}>1$ that required to be transmitted in only one time-slot in each frame, i.e., $T=1$.

In order to analyze the optimal rate allocation policy in this case, we will assume that channel use is feasible, i.e., $\mu^* > \zeta$ where $\zeta$ is the critical point, and devise a continuous approximation via central limit theorem. We first define the continuous approximation of the value function.

\begin{definition}
Let $m,x\in\mathbb{R}$ and $\Phi(z) = \mathbb{P}(Z<z)$ where $Z\sim \mathcal{N}(0, 1)$. For any $a\in\mathcal{A}$ and $\lambda \geq 0$, the continuous approximation of the value function $J_1(a)$ is defined as follows:
\begin{equation}
\tilde{J}_1(a) = -\lambda a+\underset{x\leq a, m}{\max} ~\nu(m,x),
\end{equation}
\noindent where $\nu(m,x)=-dm+x\Phi\Big(\frac{x-m\mu}{\sqrt{m\mu(1-\mu)}}\Big)(1+\lambda)$.
\end{definition}
 
 \noindent Note that $\tilde{J}_1(X)$ is a good approximation of $J_1(X)$ if $x_1$ is large and $\mu^* \gg\zeta$, which together imply that $m_1$ is large. Also, it is straightforward to show that $\nu(m,x)$ is unimodal in $x$ for a fixed $m$ and also unimodal in $m$ for a fixed $x$.
 
 In the following, we show that if $\mu^* \gg\zeta$, then $x_1=a$, i.e., it is optimal to transmit all the packets in the queue, and provide important characteristics of optimal coding rate. For a given word-length $x$, the following lemma provides a way to find the optimal block-length, which will be very useful in finding the optimal word-length for transmission and characterizing the code rate.
 
\begin{lemma}
\label{lemma:opt}
For a given $x$, let $M^*(x) = \arg\underset{m}{\max} ~\nu(m,x)$ be the optimal set. If $\mu^* \geq \zeta$, $M^*(x)$ consists of exactly one element and the unique maximizer $m\in M^*(x)$ satisfies the following equation:
\begin{equation}
\label{eqn:opt_mx}
	(1+\lambda)\frac{x}{\sqrt{m\mu^*(1-\mu^*)}}\varphi\Big(\frac{x-m\mu^*}{\sqrt{m\mu^*(1-\mu^*)}}\Big) = \frac{2d}{r+\mu^*},
\end{equation}
\noindent where $r = \frac{x}{m}$ is the code rate and $\varphi(z) = \frac{d}{dz}\Phi(z)$ is the density function of a standard Gaussian random variable.
\end{lemma}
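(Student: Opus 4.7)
The plan is to obtain \eqref{eqn:opt_mx} as the first-order optimality condition for the interior maximizer of $\nu(\cdot, x)$, and then combine it with the unimodality assertion in the remark preceding the lemma to conclude $|M^*(x)| = 1$.

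First, I would compute $\partial \nu / \partial m$. Introducing $h(m) := (x - m\mu^*)/\sqrt{m\mu^*(1-\mu^*)}$, I would decompose it in the separated form
\[
h(m) \;=\; \frac{x}{\sqrt{m\mu^*(1-\mu^*)}} \;-\; \sqrt{\frac{m\mu^*}{1-\mu^*}},
\]
so that termwise differentiation telescopes into the tidy expression
\[
\frac{\partial h}{\partial m} \;=\; -\frac{r+\mu^*}{2\sqrt{m\mu^*(1-\mu^*)}}, \qquad r := \frac{x}{m}.
\]
Applying the chain rule to the Gaussian term in $\nu(m,x)$ and using evenness of $\varphi$ to absorb the sign of $h(m)$ inside $\varphi(\cdot)$, the equation $\partial \nu / \partial m = 0$ rearranges algebraically into \eqref{eqn:opt_mx}.

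Second, I would rule out boundary maximizers so that the first-order condition is the correct characterization. Since $\mu^* \geq \zeta$, Proposition \ref{prop:cp} guarantees that activating no channels is strictly suboptimal at the frame level; specializing to the $T=1$ delay-intolerant regime in which $\nu$ lives, this gives $0 \notin M^*(x)$. On the other end, $-dm$ dominates the bounded quantity $x(1+\lambda)\Phi(\cdot)$ as $m \to \infty$, so $\nu(\cdot, x) \to -\infty$. Continuity of $\nu(\cdot, x)$ on $(0, \infty)$, together with these two facts, forces every element of $M^*(x)$ to be an interior critical point and therefore to satisfy \eqref{eqn:opt_mx}.

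Third, the strict unimodality of $\nu(m, x)$ in $m$ for fixed $x$, asserted in the sentence immediately before the lemma, reduces the $\arg\max$ to a singleton and yields $|M^*(x)| = 1$. The main obstacle I expect is the derivative simplification, which looks unwieldy before the decomposition above but collapses cleanly afterward; beyond that, a fully rigorous derivation of strict (rather than merely weak) unimodality would call for a second-derivative sign analysis mixing $\varphi$ and $\varphi'(z) = -z\varphi(z)$, which I would defer to the paper's preceding remark rather than redo in detail.
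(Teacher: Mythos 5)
Your proposal is correct and follows essentially the same route as the paper, whose proof consists only of asserting smoothness and unimodality of $\nu(\cdot,x)$ and then invoking the first-order condition $\frac{\partial}{\partial m}\nu(m,x)=0$. You additionally carry out the derivative computation (your expression for $\partial h/\partial m$ is right and does collapse to (\ref{eqn:opt_mx})) and rule out the boundary cases $m=0$ and $m\to\infty$, which the paper leaves implicit, while deferring strict unimodality to the same preceding remark the paper relies on.
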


\begin{proof}
	For a fixed $x$, it is straightforward to show that $\nu(m,x)$ is a smooth and unimodal function of $m$. Therefore, there is a unique maximizer which can be found by solving $\frac{\partial}{\partial m}\nu(m,x) = 0$.
\end{proof}

In the next proposition, we show that the optimal code rate is strictly below $\mu^*$.

\begin{proposition}[Characterization of the Code Rate]
\label{prop:rate}
Assume that $\mu^*\geq \frac{2d}{1+\lambda}$. For any $x\geq \pi/2$, let the optimal code rate be denoted as $r(x) = x/m$ for $m\in M^*(x)$. Then, there exists $\delta = \delta(x)>0$ such that $r(x) = \mu^*-\delta$.
\end{proposition}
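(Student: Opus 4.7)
\medskip
\noindent\textbf{Proof Plan.} The plan is to work directly with the first-order optimality condition in Lemma~\ref{lemma:opt} and show that, at the value of $m$ that corresponds to $r=\mu^*$, the left-hand side of that condition strictly exceeds the right-hand side. Because the equation in Lemma~\ref{lemma:opt} arises from setting $\partial_m\nu(m,x)=0$, a strict excess of the LHS at that point is equivalent to $\partial_m\nu(m,x)>0$; the unimodality and uniqueness of the maximizer asserted by Lemma~\ref{lemma:opt} will then force the optimizer to lie on the side $r<\mu^*$.

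Concretely, I would set $m_0:=x/\mu^*$, so that $r=x/m_0=\mu^*$ and the standardized argument of $\varphi$ in Lemma~\ref{lemma:opt} vanishes. Substituting $m_0$ into the optimality equation and using $\varphi(0)=1/\sqrt{2\pi}$ together with $\sqrt{m_0\mu^*(1-\mu^*)}=\sqrt{x(1-\mu^*)}$ gives
\begin{equation*}
\text{LHS}\big|_{m_0}=\frac{1+\lambda}{\sqrt{2\pi}}\sqrt{\frac{x}{1-\mu^*}}, \qquad \text{RHS}\big|_{m_0}=\frac{d}{\mu^*}.
\end{equation*}
The hypothesis $x\geq \pi/2$ yields $\text{LHS}|_{m_0}\geq (1+\lambda)/(2\sqrt{1-\mu^*})$, and the hypothesis $(1+\lambda)\mu^*\geq 2d$ then gives the chain
\begin{equation*}
\text{LHS}\big|_{m_0}\;\geq\;\frac{1+\lambda}{2\sqrt{1-\mu^*}}\;\geq\;\frac{d}{\mu^*\sqrt{1-\mu^*}}\;>\;\frac{d}{\mu^*}\;=\;\text{RHS}\big|_{m_0}.
\end{equation*}

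Strict excess of LHS over RHS translates into $\partial_m\nu(m_0,x)>0$, so $\nu(\cdot,x)$ is still increasing at $m_0$. By the unimodality and uniqueness established in Lemma~\ref{lemma:opt}, the optimal $m\in M^*(x)$ must therefore satisfy $m>m_0=x/\mu^*$, whence $r(x)=x/m<\mu^*$, and one sets $\delta(x):=\mu^*-r(x)>0$. The one subtle point I anticipate is securing the \emph{strict} inequality in the last step above from the non-strict hypotheses; I would handle it by invoking $1-\mu^*<1$, which is immediate from $\mu^*>0$ (a consequence of $\mu^*\geq 2d/(1+\lambda)>0$), rather than trying to sharpen the assumption on $\mu^*$ itself.
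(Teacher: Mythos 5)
Your proposal is correct and follows essentially the same route as the paper: both evaluate the marginal condition at $m_0=x/\mu^*$ (your "LHS vs.\ RHS at $m_0$" is algebraically identical to the paper's computation of $\partial_m\nu(x/\mu^*,x)=(1+\lambda)\mu^*\sqrt{x/(2\pi(1-\mu^*))}-d$), show positivity from $x\ge\pi/2$ and $(1+\lambda)\mu^*\ge 2d$, and conclude $m>x/\mu^*$ by unimodality. The only difference is the order in which the two hypotheses are applied in the inequality chain, which is immaterial.
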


\begin{proof}
Fix $x\geq 1$. Note that the function $\nu(m,x)$ is unimodal for $m\geq 0$ and the assumption $\mu^* \geq \frac{2d}{1+\lambda}$ implies that $\nu(\frac{x}{\mu^*},x) > 0$ by Proposition \ref{prop:cp}. Then,
\begin{eqnarray*}
\frac{\partial}{\partial m}\nu(\frac{x}{\mu^*}, x) &=& (1+\lambda)\mu^*\sqrt{\frac{x}{2\pi(1-\mu^*)}} - d, \\
&\geq& d\cdot\Big(\sqrt{\frac{2x}{\pi(1-\mu^*)}}-1\Big), \\
&>& 0,
\end{eqnarray*}
\noindent where the second line follows from the assumption $\mu^* \geq \frac{2d}{1+\lambda}$ and the last line follows since $x\geq \pi/2$. Together with this, the unimodality implies that the optimal point $m$ is strictly above $\frac{x}{\mu^*}$. Hence, $r(x) < \mu^*$, i.e., the optimal code rate is strictly below $\mu^*$.
\end{proof}

\begin{remark}
The result of Proposition \ref{prop:rate} is very intuitive: for a fixed word-length $x$, a slight increase in the block-length $m$ beyond $x/\mu^*$ leads to an exponential increase in the probability of success at the expense of only a linear increase in the cost, which makes a slightly lower code rate than $\mu^*$ optimal.
\end{remark}

Finally, we show that if $\mu^*$ is sufficiently large, then it is optimal to attempt the transmission of all packets in the queue.

\begin{proposition}
\label{prop:mx}
If $\mu^*\geq \min\{1, \frac{4d}{1+\lambda}\}$, then $x_1 = a$ where $(m_1, x_1)$ is the optimal code selection.
\end{proposition}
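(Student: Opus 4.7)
The plan is to work with the continuous approximation $\tilde{J}_1$ and show that the profile function $g(x):=\max_m \nu(m,x)$ is non-decreasing in $x$ on $[0,a]$. Since $\tilde{J}_1(a) = -\lambda a + \max_{x\leq a} g(x)$ and the penalty $-\lambda a$ is independent of the optimization variables, this monotonicity forces the maximizer to lie at the upper endpoint $x_1 = a$.

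First I would dispatch the degenerate regime $\mu^*=1$, which is exactly the case when $\frac{4d}{1+\lambda}>1$. In this sub-case, $S^n_m\equiv m$ almost surely, so the optimum uses $m = x$ channels and yields revenue $(1+\lambda-d)x$; this is strictly increasing in $x$ because $d\leq 1\leq 1+\lambda$, so $x_1=a$ trivially.

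For the main case $\mu^*\in[\frac{4d}{1+\lambda},1)$, Lemma \ref{lemma:opt} provides a unique optimizer $m^*(x)$, and smoothness in $x$ follows from the implicit function theorem applied to the first-order condition (\ref{eqn:opt_mx}). Hence the envelope theorem delivers $g'(x) = \partial_x \nu(m^*(x), x)$. Differentiating $\nu$ directly (with the success probability $P_{m,x}(\mu^*)$ approximated via the CLT) produces
\begin{equation*}
g'(x) = (1+\lambda)\left[P_{m^*(x),x}(\mu^*) - \frac{x\,\varphi(z^*)}{\sqrt{m^*(x)\mu^*(1-\mu^*)}}\right],
\end{equation*}
where $z^* = (x-m^*(x)\mu^*)/\sqrt{m^*(x)\mu^*(1-\mu^*)}$. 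Substituting the first-order condition from Lemma \ref{lemma:opt} converts this to
\begin{equation*}
g'(x) = (1+\lambda)\left[P_{m^*(x),x}(\mu^*) - \frac{2d}{(1+\lambda)(r(x)+\mu^*)}\right],
\end{equation*}
with $r(x) = x/m^*(x)$ the optimal code rate.

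To close the argument, I would bound each of the two terms separately. By Proposition \ref{prop:rate}, whose hypotheses are met since $\mu^*\geq\frac{4d}{1+\lambda}\geq\frac{2d}{1+\lambda}$ and $x\geq\pi/2$ holds for integer $x\geq 2$, we have $r(x)<\mu^*$, so $z^*<0$ and therefore $P_{m^*(x),x}(\mu^*)>\tfrac{1}{2}$ in the CLT approximation. Simultaneously, $r(x)+\mu^*\geq\mu^*\geq\frac{4d}{1+\lambda}$ yields $\frac{2d}{(1+\lambda)(r(x)+\mu^*)}\leq\tfrac{1}{2}$. Combining these two bounds gives $g'(x)>0$, so $g$ is strictly increasing on the relevant range; the remaining cases $a\in\{0,1\}$ are trivial because $x=a$ is then the only non-zero feasible choice. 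The main technical obstacle I foresee is justifying the envelope-theorem step: it requires $m^*(x)$ to depend smoothly on $x$, which I would derive from the strict unimodality of $\nu(\cdot,x)$ established in Lemma \ref{lemma:opt} together with the implicit function theorem. A secondary subtlety is translating the monotonicity of $g$ in the continuous approximation $\tilde{J}_1$ back to the discrete optimizer of $J_1$; the paper's standing justification that $\tilde{J}_1$ is accurate when $\mu^*\gg\zeta$, which is consistent with our hypothesis $\mu^*\geq\frac{4d}{1+\lambda}$, underwrites this step.
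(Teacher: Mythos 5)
Your proposal is correct and follows essentially the same route as the paper's proof: apply the envelope theorem to $\nu(m^*(x),x)$, substitute the first-order condition from Lemma \ref{lemma:opt}, invoke Proposition \ref{prop:rate} to get $P_{m^*(x),x}(\mu^*)>\tfrac12$, and use $\mu^*\geq\frac{4d}{1+\lambda}$ to bound the remaining term by $\tfrac12$. Your separate treatment of the degenerate endpoint $\mu^*=1$ (where the Gaussian approximation breaks down and the paper's final inequality is not literally checked) and your explicit appeal to the implicit function theorem for the smoothness of $m^*(x)$ are small refinements over the paper's argument, not a different approach.
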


\begin{proof}
For any $x>0$, let $m = m(x) \in M^*(x)$ and $q(x) = \nu(m(x), x)$. We will show that $dq(x)/dx \geq 0$ for all $x$, which says that $x_1 = a$ and $m_1 \in M^*(a)$. By Envelope Theorem \cite{EnvTh},
\begin{eqnarray*}
	\frac{dq(x)}{dx} &=& \frac{\partial}{\partial x}\nu(m, x), \\
	&=& (1+\lambda)(1-\Phi\Big(\frac{x-m\mu^*}{\sqrt{m\mu^*(1-\mu^*)}}\Big)) \\ && -(1+\lambda)\frac{x}{\sqrt{m\mu^*(1-\mu^*)}}\varphi\Big(\frac{x-m\mu^*}{\sqrt{m\mu^*(1-\mu^*)}}\Big), \\
	&=& (1+\lambda)(1-\Phi\Big(\frac{x-m\mu^*}{\sqrt{m\mu^*(1-\mu^*)}}\Big))\\ && \hskip 1.75in -\frac{2d}{x/m+\mu^*},
\end{eqnarray*}
\noindent where the last equality follows from the optimality condition in Lemma \ref{lemma:opt}. By Proposition \ref{prop:rate}, we have $x < m\mu^*$, which indicates that $(1-\Phi\Big(\frac{x-m\mu^*}{\sqrt{m\mu^*(1-\mu^*)}}\Big)) > 1/2$. Therefore,
\begin{eqnarray*}
\frac{dq(x)}{dx} &>& \frac{1+\lambda}{2} - \frac{2d}{\mu^*}, \\
 &\geq& 0,
\end{eqnarray*}
\noindent where the last line is true if $\mu^* \geq \min\{1, \frac{4d}{1+\lambda}\}$. Thus, $dq(x)/dx > 0$ and the proof follows.
\end{proof}

\noindent This result is particularly important as it eliminates one of the constraints in (\ref{eqn:dp}).
  
\textbf{Example:} In Figure \ref{fig:opt_intol_mx} and Figure \ref{fig:opt_intol_r}, we investigate the behavior of $(m_1,x_1)$ and code rate $r$ over all possible $\mu^*\in[0,1]$ in the delay-intolerant setting under the continuous approximation. In this example, we assume that $a = 6$, $d = 0.25$, $\lambda = 1$.
 \begin{figure}[htb] 
 \centering
\includegraphics[width=.7\linewidth]{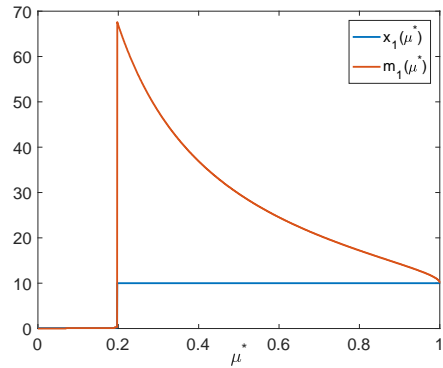}   
\caption{The optimal $(m_1,x_1)$ for $\mu^*\in[0,1]$. It is optimal to transmit all packets in the queue for large enough $\mu^*$.}
 \label{fig:opt_intol_mx}
\end{figure}
 \begin{figure}[htb] 
  \centering
{\includegraphics[width=.7\linewidth]{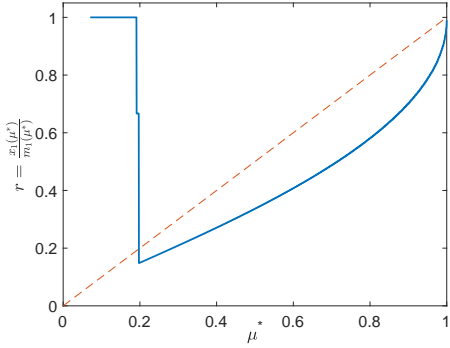} }
\caption{The optimal code rate $r$ for $\mu^*\in[0,1]$. Code rate is strictly below $\mu^*$ for $\mu^*\geq \frac{2d}{1+\lambda}$.}
 \label{fig:opt_intol_r}
\end{figure}

\noindent In this case, we first observe that $r < \mu^*$ for $\mu^*\geq \frac{2d}{1+\lambda}$ and $x_1 = a$ for $\mu^*\geq \frac{4d}{1+\lambda}$, which verify Proposition \ref{prop:rate} and Proposition \ref{prop:mx}, respectively. Secondly, similar to the delay-tolerant case, the channel use becomes less favorable as the reliability $\mu^*$ increases beyond a certain level since an additional channel is not required when the existing ones are already reliable.


In the following, we will introduce an index-based learning algorithm based on the optimal rate allocation algorithm, and show that it achieves desirable performance.

\section{Learning Algorithm: UCB-Deadline}
%

In this section, we will introduce an index-based algorithm called that UCB-Deadline, which achieves order-optimal regret performance for the exploration-and-exploitation problem at hand. In this section, we first develop the algorithm, then find regret upper bounds that will lead us to the optimality result.

%
%
%

\subsection{UCB-Deadline}

For the exploration-and-exploration problem at hand, learning must be reinforced when the confidence is low in order to avoid linear regret in certain sample paths on which exploration is stopped at an early stage, and the estimates must converge to the true mean after a sufficiently long time for achieving small regret in the long-run. Utilization of upper confidence bound (UCB) in the absence of the true mean reinforces learning through "optimism in the face of uncertainty" \cite{Bubeck2012}, therefore is a suitable strategy in algorithm design. In the following, we define a policy named UCB-Deadline that makes use of UCB to determine the number of channels to be activated.
\begin{definition}[UCB-Deadline]
Let $M(n) = \sum_{t=1}^T m_t$ be the number of channels that are activated in frame $n$, $Z(n) = \sum_{k=1}^n M(k)$ be the number of activated channels until frame $n$, and
\begin{equation}
\xi(n)=\frac{1}{Z(n)}\sum\limits_{k=1}^n \sum_{t=1}^T \sum\limits_{i\in\mathcal{U}_k(t)}C_i^k(t),
\end{equation}
be the sample mean of activated channels until frame $n$, and $c_{n, s}=\sqrt{\frac{\beta\log n}{2s}}$ for $\beta > 0$. UCB at frame $n$ is defined as follows:
\begin{equation}
	\bar{\mu}_{Z(n-1)}(n)=\xi(n-1)+c_{n,Z(n-1)}.
\end{equation}

\noindent Let $\hat{\mu}_s$ be the sample mean of channel realizations after $s$ channel uses. Since all channels are iid and symmetric, $\xi(n-1) \overset{d}{=} \hat{\mu}_{Z(n-1)}$, which will provide simplicity in the performance analysis.

With these definitions, UCB-Deadline with parameter $\beta$, denoted as UCB-Deadline($\beta$), is summarized in Algorithm \ref{alg:UCB},
\begin{algorithm}
\SetKwData{Left}{left}\SetKwData{This}{this}\SetKwData{Up}{up}
\SetKwFunction{Union}{Union}\SetKwFunction{FindCompress}{FindCompress}
\SetKwInOut{Input}{input}
\Input{$\beta>0$}
\BlankLine

\textbf{Initialization}:
$Z(0)=1$;~
$\xi(0)\sim Ber(\mu^*)$;

\For{$n=1,2,\ldots,N$}{
	$\bar{\mu}_{Z(n-1)}(n) = \xi(n-1)+c_{n, Z(n-1)}$; \\
	$X_T = A(n)$;\\
	\For{$t=T,T-1,\ldots, 1$}{
		$(m_t,x_t) = (\pi_{A(n)}^{*})^t(\bar{\mu}_{Z(n-1)}(n))$;\\
		\eIf{$m_t \geq x_t$} {$X_{t-1} = X_t-x_t$;} {$X_{t-1}=X_t$;} 
		}

		$Z(n) = Z(n-1)+\sum_{t=1}^Tm_t$;\\
		$\xi(n)=\frac{1}{Z(n)}\cdot\Big(Z(n-1)\cdot \xi(n-1)+\sum_{t=1}^T\sum\limits_{i=1}^{m_t}C_i^n(t)\Big)$;
}
\caption{UCB-Deadline($\beta$)}
\label{alg:UCB}
\end{algorithm}

\noindent where $(\pi_{A(n)}^*)^t$ is the optimal policy defined in (\ref{eqn:opt_rate_alloc}).
\end{definition}

In the following subsection, performance guarantees under UCB-Deadline will be presented in the form of regret upper bounds.

\subsection{Regret Analysis of UCB-Deadline}
\label{sec:perf}

In this section, we will provide upper bounds for the regret under UCB-Deadline. The strategy to accomplish this is as follows: first we will provide two lemmas in a general setting, and then use these lemmas to upper bound the regret under UCB-Deadline.

\begin{lemma}
\label{lemma:01}
Consider a case where the optimal policy is $\pi^*_{a}(\mu^*)$ makes decisions $\mathbf{m}_a\cdot\mathbb{I}_{\{\mu^* > \zeta\}}$ where $\mathbf{m}_a = (m^T_a,\ldots,m^1_a)$. Let $Z_0(n) = \sum_{t=1}^n \mathbb{I}_{\{\pi_{A(n)}^*(\bar{\mu}_{Z(n-1)}(n)) = (\mathbf{0},\mathbf{0})\}}$ be the number of frames when all channels are idle under UCB-Deadline. Under UCB-Deadline with $\beta \geq 3$, the following upper bounds are obtained for any $a\in\mathcal{A}$: 
\begin{enumerate}
	\item If $\mu^* > \zeta$, then $\mathbb{E}[Z_0(N)] \leq TM_{max}\frac{\pi^2}{6}$,
	\item If $\mu^* \leq \zeta$, then 
	\begin{equation*} \mathbb{E}[N-Z_0(N)] \leq \frac{2\beta\log N}{(\zeta-\mu^*)^2}+TM_{max}\frac{\pi^2}{6},			\end{equation*}
\end{enumerate}
\noindent for all $N\geq 1$.
\end{lemma}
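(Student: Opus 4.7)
The plan is to reduce the whole analysis to events of the form $\{\bar{\mu}_{Z(n-1)}(n) < \zeta\}$ versus $\{\bar{\mu}_{Z(n-1)}(n) \geq \zeta\}$. By Proposition~\ref{prop:cp} applied to the belief $\bar{\mu}_{Z(n-1)}(n)$, the optimal policy returns the all-zero vector $(\mathbf{0},\mathbf{0})$ exactly when the belief falls below the critical value $\zeta$, so $Z_0(N) = \sum_{n=1}^N \mathbb{I}\{\bar{\mu}_{Z(n-1)}(n) < \zeta\}$. The proof therefore reduces to showing that the UCB lies on the wrong side of $\zeta$ only rarely.

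For part~1 ($\mu^* > \zeta$), I would bound $\mathbb{P}(\bar{\mu}_{Z(n-1)}(n) < \zeta)$ by first conditioning on $Z(n-1) = s$. By the symmetry of the channels, $\xi(n-1) \overset{d}{=} \hat{\mu}_s$, where $\hat{\mu}_s$ is the sample mean of $s$ i.i.d.\ $\mathrm{Ber}(\mu^*)$ draws, and the event $\hat{\mu}_s + c_{n,s} < \zeta$ forces $\mu^* - \hat{\mu}_s > (\mu^* - \zeta) + c_{n,s} \geq c_{n,s}$. Hoeffding's inequality then yields $\exp(-2sc_{n,s}^2) = n^{-\beta}$, a bound that is independent of $s$. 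Taking a union bound over $s \in \{1,\ldots, n M_{max}\}$ (a valid range by the remark following Proposition~\ref{prop:bdd}) gives $\mathbb{P}(\bar{\mu}_{Z(n-1)}(n) < \zeta) \lesssim M_{max}\,n^{1-\beta}$, and summing over $n$ with $\beta \geq 3$ uses $\sum_{n \geq 1} n^{-2} = \pi^2/6$ to produce the claimed constant.

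For part~2 ($\mu^* \leq \zeta$), I would follow the standard UCB-style decomposition, adapted to the fact that the ``pull'' is implicit in the comparison $\bar{\mu}(n) \geq \zeta$. Choose $\ell_n = \lceil 2\beta \log n / (\zeta - \mu^*)^2 \rceil$, so that $c_{n,\ell_n} = (\zeta - \mu^*)/2$, and split $\{\bar{\mu}_{Z(n-1)}(n) \geq \zeta\}$ according to whether $Z(n-1) \leq \ell_n$ or $Z(n-1) > \ell_n$. On the latter event, $c_{n,Z(n-1)} < (\zeta - \mu^*)/2$, which together with $\bar{\mu}(n) \geq \zeta$ forces the one-sided Hoeffding failure $\hat{\mu}_{Z(n-1)} - \mu^* > c_{n,Z(n-1)}$; the expected number of such frames is controlled exactly as in part~1. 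On the former event, each contributing frame is non-idle and hence increments $Z$ by at least one, so starting from $Z(0) = 1$ and using $\ell_n \leq \ell_N$, at most $\ell_N$ such frames can occur, which yields the $2\beta \log N/(\zeta-\mu^*)^2$ term.

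The main technical obstacle is that $Z(n-1)$ is a random, algorithm-dependent index, while Hoeffding's inequality controls a sample mean of a \emph{fixed} number of draws. The union bound over the almost-surely bounded range of $Z(n-1)$ is the standard remedy, and it is precisely this device that dictates the condition $\beta \geq 3$: the extra factor of $n$ picked up from the union bound must be absorbed so that $\sum_{n} n^{1-\beta}$ converges, producing the constant corrections in both parts while leaving a single $\log N$ term in part~2.
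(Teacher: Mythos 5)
Your proposal is correct and follows essentially the same route as the paper's own proof: a union bound over the almost-surely bounded range of the random index $Z(n-1)$ combined with the Chernoff--Hoeffding bound for part 1, and the standard UCB decomposition (at most $\lceil 2\beta\log N/(\zeta-\mu^*)^2\rceil$ non-idle frames with small $Z(n-1)$, plus a one-sided concentration failure otherwise) for part 2. The only differences are cosmetic, e.g.\ your union bound runs over $nM_{max}$ rather than the paper's looser $TM_{max}n$ values of $s$.
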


\noindent Lemma \ref{lemma:01} implies that in a binary decision case, UCB-Deadline makes a bounded number of wrong decisions if the true mean is higher than the critical point, and a logarithmically growing number of wrong decisions over time otherwise in the expected sense.

\begin{lemma}
	\label{lemma:012}
	Fix $a\in\mathcal{A}$. Let $\zeta^l_a < \zeta^u_a$ be two given constants in $[0, 1]$. Consider the following optimal policy:
\begin{equation}
    (m_t, x_t) = 
\begin{cases}
    (0, 0),& \text{if } \mu^* < \zeta^l_a\\
    (m_t^*, x_t^*),              & \text{if } \mu^* \in [\zeta^l_a,\zeta^u_a] \\
    (\tilde{m}_t, \tilde{x}_t), & \text{if } \mu^* > \zeta^u_a.
\end{cases}
\label{eqn:lemma012_opt}
\end{equation}
for some $(m_t^*,x_t^*)>0$ and $(\tilde{m}_t,\tilde{x}_t)>0$. Assume $\mu^* \in [\zeta^l_a,\zeta^u_a]$. Under UCB-Deadline with $\beta \geq 4$, the following upper bounds hold for all $n\geq 1$:
\begin{enumerate}
	\item $\mathbb{E}[ \sum_{n=1}^N \mathbb{I}_{\{\pi_{A(n)}^*(\bar{\mu}_{Z(n-1)}(n)) = (0,0)\}}] \leq TM_{max}\frac{\pi^2}{6}$.
	\item $\mathbb{E}[ \sum_{n=1}^N \mathbb{I}_{\{\pi_{A(n)}^*(\bar{\mu}_{Z(n-1)}(n)) = (\tilde{m},\tilde{x})\}}]   \leq TM_{max}\frac{\pi^2}{6}+\Psi(\frac{(\mu^*-\zeta_a^u)^2}{2\beta}) < \infty$, where
	\begin{equation*}
		\Psi(\epsilon) = n_\epsilon+TM_{max}\cdot\frac{\pi^2}{2}\sum_{n=1}^\infty\frac{1}{(n-\log(n+1)/\epsilon)^2}
\end{equation*}
\noindent and $n_\epsilon=\inf\{n:n-\frac{\log(n+1)}{\epsilon}>0\}$.
\end{enumerate}
\end{lemma}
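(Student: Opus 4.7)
The plan is to treat the two wrong decisions as underestimation/overestimation events on the UCB $\bar{\mu}_{Z(n-1)}(n)=\hat{\mu}_{Z(n-1)}+c_{n,Z(n-1)}$ around $\mu^*$. For Part~1 the argument parallels Lemma~\ref{lemma:01}: since $\mu^*\ge\zeta_a^l$, the event $\{\pi^*_{A(n)}(\bar{\mu}_{Z(n-1)}(n))=(0,0)\}$ coincides with $\{\bar{\mu}_{Z(n-1)}(n)<\zeta_a^l\}$ and forces $\hat{\mu}_{Z(n-1)}<\mu^*-c_{n,Z(n-1)}$. Union-bounding over the possible sample counts $s\in\{1,\dots,nTM_{max}\}$ of $Z(n-1)$ and applying Hoeffding gives $\mathbb{P}(\hat{\mu}_s+c_{n,s}<\mu^*)\le e^{-2sc_{n,s}^2}=n^{-\beta}$, so the per-frame probability is at most $TM_{max}/n^{\beta-1}$, which sums (for $\beta\ge 3$) to at most $TM_{max}\pi^2/6$.

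For Part~2, set $\Delta=\zeta_a^u-\mu^*>0$ and $L_n=\log n/\epsilon$ with $\epsilon=\Delta^2/(2\beta)$, so that $s>L_n$ is exactly equivalent to $c_{n,s}<\Delta/2$. I split $\{\bar{\mu}_{Z(n-1)}(n)>\zeta_a^u\}$ according to whether $Z(n-1)>L_n$ or not. In the ``enough samples'' case $Z(n-1)>L_n$, $c_{n,Z(n-1)}<\Delta/2$ forces $\hat{\mu}_{Z(n-1)}>\mu^*+\Delta/2$, and union-bounding over $s>L_n$ with Hoeffding yields a per-frame bound $\le nTM_{max}\cdot e^{-L_n\Delta^2/2}=TM_{max}/n^{\beta-1}$, summing to at most $TM_{max}\pi^2/6$. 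In the ``insufficient samples'' case $Z(n-1)\le L_n$, since every non-$(0,0)$ decision contributes at least one channel use, $Z(n-1)\ge n-1-Z_0(n-1)$, hence $\{Z(n-1)\le L_n\}\subseteq\{Z_0(n-1)\ge n-1-L_n\}$, where $Z_0$ is the wrong-$(0,0)$ counter controlled by Part~1.

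A plain Markov bound using $\mathbb{E}[Z_0(n-1)]\le TM_{max}\pi^2/6$ would give only $O(1/(n-L_n))$ per frame and hence logarithmic growth, which is insufficient. To obtain the $1/(n-\log(n+1)/\epsilon)^2$ tail appearing in $\Psi(\epsilon)$, I instead bound $\mathbb{E}[Z_0(n)^2]$. Writing $Z_0^2=\sum_k\mathbb{I}_k^2+2\sum_{k<l}\mathbb{I}_k\mathbb{I}_l$ and using the elementary inequality $\mathbb{I}_k\mathbb{I}_l\le\mathbb{I}_l$ for $k\le l$ gives $Z_0^2\le Z_0+2\sum_l l\,\mathbb{I}_l$; combined with the per-frame bound $\mathbb{P}(\mathbb{I}_l=1)\le TM_{max}/l^{\beta-1}$ from Part~1, the condition $\beta\ge 4$ makes $\sum_l l^{2-\beta}$ convergent and yields $\mathbb{E}[Z_0(n)^2]\le TM_{max}\pi^2/2$. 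Markov applied to $Z_0^2$ then gives $\mathbb{P}(Z_0(n-1)\ge n-1-L_n)\le TM_{max}\pi^2/\bigl(2(n-1-L_n)^2\bigr)$ for $n\ge n_\epsilon$, with the trivial bound $1$ for $n<n_\epsilon$; summation produces exactly $\Psi(\epsilon)$, and combining with the ``enough samples'' contribution completes Part~2.

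The main obstacle is this second-moment analysis: the wrong-$(0,0)$ indicators are correlated through the shared channel-feedback history, so neither independence nor a direct variance identity applies. Bypassing the covariances by the trivial bound $\mathbb{I}_k\mathbb{I}_l\le\mathbb{I}_l$ is the key trick, and it is precisely this that forces $\beta\ge 4$ (one more than Lemma~\ref{lemma:01}) by requiring $\sum_l l^{2-\beta}<\infty$; everything else is a careful but routine UCB calculation.
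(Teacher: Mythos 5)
Your proposal is correct and follows essentially the same route as the paper: the under/over-estimation split via the UCB concentration event, the reduction of the ``insufficient samples'' case to a tail bound on $Z_0$ through $Z(n-1)\ge n-1-Z_0(n-1)$, and the second-moment bound on $Z_0$ obtained by the diagonal/off-diagonal decomposition with $\mathbb{I}_k\mathbb{I}_l\le\mathbb{I}_l$ (the paper's Claims 2--4), which is exactly where the requirement $\beta\ge 4$ enters. The only differences are cosmetic (you bound the overestimation event by $\hat{\mu}_s>\mu^*+\Delta/2$ rather than $\hat{\mu}_s>\mu^*+c_{n,s}$, and there is a harmless off-by-one in the index of $\Psi$).
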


\noindent Lemma \ref{lemma:012} says that if the true mean is in an interval with nonempty interior so that the correct decision can be made after sufficient concentration around the mean, then the numbers of wrong decisions under UCB-Deadline are bounded in both directions in the expected sense.

Proofs of Lemma \ref{lemma:01} and Lemma \ref{lemma:012} will be given in Appendix.

The following theorem provides performance guarantees under UCB-Deadline.

\begin{theorem}[Regret Upper Bounds for UCB-Deadline]
The following upper bounds hold for the regret under UCB-Deadline with parameter $\beta\geq 4$.
\begin{enumerate}
	\item If $\mu^* < \zeta$, then
	\begin{equation}
		\bar{R}_N \leq \sum\limits_{a\in\mathcal{A}}\alpha_a \cdot B_a\cdot\Big (\frac{2\beta\log N }{(d-\mu^*)^2}+TM_{max}\frac{\pi^2}{6}\Big ).
	\end{equation}
	
	\item If $\mu^* \geq \zeta$, let $\zeta_a^l\leq \mu^* \leq \zeta_a^u$ be the largest interval such that $\pi_a^*(\mu)=\pi_a^*(\mu^*),~\forall{\mu}\in[\zeta_a^l,\zeta_a^u]$ for any $a\in\mathcal{A}$. Then,
	\begin{multline}
		\bar{R}_N \leq \sum\limits_{a\in\mathcal{A}}\alpha_a \Big(\mathbb{E}[J(a,\pi_a^*(\mu^*))]+B_a\Big) \\ \cdot (TM_{max}\frac{\pi^2}{3}+\Psi\Big(\frac{(\mu^*-\zeta_a^u)^2}{2\beta}\Big)),
	\end{multline}
	\noindent for all $N\geq 1$.
\end{enumerate}

\label{thm:regret_bnd}
\end{theorem}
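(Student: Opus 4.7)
The plan is to reduce the regret bound to counting wrong decisions, leveraging Proposition \ref{prop:bdd} to bound the instantaneous per-frame regret and then invoking Lemma \ref{lemma:01} or Lemma \ref{lemma:012} to bound the expected number of such wrong frames. Starting from the definition of regret in (\ref{eqn:regret}), I would condition on $A(n)=a$ and split the inner sum over frames into those in which UCB-Deadline matches the genie-aided optimal action $\pi_a^*(\mu^*)$ (contributing zero regret, by definition) and those in which it does not. On the suboptimal frames, the per-frame regret is $J(a,\pi_a^*(\mu^*))-J(a,\pi_a(\bar{\mu}_n))$, and after taking expectation this is at most $\mathbb{E}_{\mu^*}[J(a,\pi_a^*(\mu^*))]+B_a$ via Proposition \ref{prop:bdd}. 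Thus $\bar{R}_N$ is bounded by a sum over arrivals of (per-frame regret ceiling) times (expected number of suboptimal frames).

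For Case 1 ($\mu^*<\zeta$), Proposition \ref{prop:cp} makes the optimal decision degenerate: $\pi_a^*(\mu^*)$ activates no channels in any slot, so the only way UCB-Deadline is suboptimal in frame $n$ is if $\pi_a^*(\bar{\mu}_{Z(n-1)}(n))\neq(\mathbf{0},\mathbf{0})$. The expected count of such frames is exactly $\mathbb{E}[N-Z_0(N)]$, which the second bound of Lemma \ref{lemma:01} controls by $\frac{2\beta\log N}{(\zeta-\mu^*)^2}+TM_{max}\pi^2/6$. Since the optimal revenue is simply $J_0(a)$, the per-frame regret ceiling collapses to a constant that is already absorbed into $B_a$ (as defined in the proof of Proposition \ref{prop:bdd}), giving the stated logarithmic bound. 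For Case 2 ($\mu^*\geq\zeta$), I would set $[\zeta_a^l,\zeta_a^u]$ to be the maximal interval around $\mu^*$ on which $\pi_a^*(\cdot)$ is constant, and decompose the suboptimal frames into an "underestimate" event $\{\bar{\mu}_{Z(n-1)}(n)<\zeta_a^l\}$ and an "overestimate" event $\{\bar{\mu}_{Z(n-1)}(n)>\zeta_a^u\}$. The underestimate frames are exactly the $(0,0)$-decision frames bounded by part (1) of Lemma \ref{lemma:012} by $TM_{max}\pi^2/6$; the overestimate frames are bounded by part (2) by $TM_{max}\pi^2/6+\Psi\bigl((\mu^*-\zeta_a^u)^2/(2\beta)\bigr)$. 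Adding these yields $TM_{max}\pi^2/3+\Psi(\cdot)$, and multiplying by the per-frame regret ceiling $\mathbb{E}[J(a,\pi_a^*(\mu^*))]+B_a$ and summing over $a$ gives the second claim.

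The main obstacle I anticipate is a subtle one about "suboptimal" decisions in Case 2: Lemma \ref{lemma:012} is phrased for a ternary regime (zero, $(m^*,x^*)$, or $(\tilde m,\tilde x)$), whereas the actual DP-derived optimal map $\mu\mapsto\pi_a^*(\mu)$ may be piecewise constant with many pieces. I need to argue that for the regret contribution, it suffices to collapse all regions outside $[\zeta_a^l,\zeta_a^u]$ into the "wrong" bucket so that Lemma \ref{lemma:012} applies with $(\tilde m,\tilde x)$ interpreted as \emph{any} decision corresponding to $\bar{\mu}_{Z(n-1)}(n)>\zeta_a^u$. This works because the lemma's proof only uses the confidence-interval event $\{|\hat{\mu}_{Z(n-1)}-\mu^*|>\zeta_a^u-\mu^*\}$, which is insensitive to which specific higher-aggression code is picked above $\zeta_a^u$; a short remark tying this invariance back to the UCB concentration argument will finish the proof. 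A secondary minor point is that per-frame regret must be taken in expectation with respect to the channel randomness conditional on the decision, which is clean since $B_a$ in Proposition \ref{prop:bdd} is a deterministic lower bound on $\mathbb{E}_{\mu^*}[J(a,\pi_a(\mu))]$ uniformly in $\mu$.
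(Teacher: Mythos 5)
Your proposal is correct and follows essentially the same route as the paper's own proof: bound the per-frame regret by $\mathbb{E}[J(a,\pi_a^*(\mu^*))]+B_a$ via Proposition \ref{prop:bdd}, count the suboptimal frames with Lemma \ref{lemma:01} when $\mu^*<\zeta$ and with the underestimate/overestimate split of Lemma \ref{lemma:012} when $\mu^*\geq\zeta$. Your closing remark on collapsing the many-piece optimal policy map into the ternary structure of Lemma \ref{lemma:012} is exactly the step the paper handles informally in inequality (a) of its proof, and your justification of it is sound.
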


\begin{proof}
	\begin{enumerate}
		\item Note that $\mu^*<\zeta$ implies that $\pi_a^*(\mu^*)=(\mathbf{0},\mathbf{0})$ and therefore $J(a, \pi_a^*(\mu^*))=0,~\forall a\in\mathcal{A}$. Thus, the regret is upper bounded by using (\ref{eqn:regret}) as follows:
		\begin{eqnarray*}
		\bar{R}_N &=& \sum_{a\in\mathcal{A}} \alpha_a \cdot \mathbb{E} \Big[\sum_{n=1}^N -J(a, \pi_a^*(\bar{\mu}(n)))\Big] \\
		&\leq& \sum_{a\in\mathcal{A}} \alpha_a \cdot \mathbb{E} [\sum_{n=1}^N B_a\cdot\mathbb{I}_{\{I_a(t)\neq 0\}}] \\
		&=& \sum_{a\in\mathcal{A}} \alpha_a B_a\cdot \mathbb{E} [N-Z_0(N)] \\
		&\overset{(a)}{\leq}& \sum_{a\in\mathcal{A}} \alpha_a B_a \cdot(\frac{2\beta\log N}{(\zeta-\mu^*)^2}+TM_{max}\frac{\pi^2}{6}),
		\end{eqnarray*}
		\noindent  where (a) follows from Lemma \ref{lemma:01}.
		
		\item Let $\Delta J_{a}^{max} = \mathbb{E}[ J(a, \pi_a^*(\mu^*))+B_a]$. Note that $\Delta J_{a}^{max}$ is an upper bound for the instantaneous regret for any $a\in\mathcal{A}$. Then, the regret under UCB-Deadline can be upper bounded as follows:
		\begin{eqnarray*}
			\bar{R}_N \hskip -0.1in &\leq& \hskip -0.1in \sum_{a\in\mathcal{A}} \alpha_a \Delta J_{a}^{max} \sum_{n=1}^N \mathbb{E} \Big[   \mathbb{I}_{ \{ \pi_a^*(\bar{\mu}(n)) \neq \pi_a^*(\mu^*) \}} \Big] \\
			&\leq& \hskip -0.1in \sum_{a\in\mathcal{A}} \alpha_a \Delta J_{a}^{max} \sum_{n=1}^N \mathbb{E} \Big[   \mathbb{I}_{ \{ \bar{\mu}(n) < \zeta_a^l \}} + \mathbb{I}_{ \{ \bar{\mu}(n) > \zeta_a^u \}}  \Big] \\			
			&\overset{(a)}{\leq}& \hskip -0.1in \sum_{a\in\mathcal{A}} \alpha_a \Delta J_{a}^{max} \\ && \hskip 0.15in \cdot \sum_{n=1}^N \mathbb{E} \Big[   \mathbb{I}_{ \{ \pi_a^*(n) = 0 \}} + \mathbb{I}_{ \{ \pi_a^*(n) = \underset{\hat{\mu}>\zeta_a^u}{\min} ~I_a^*(\hat{\mu}) \}}  \Big] \\			
			&\overset{(b)}{\leq}& \hskip -0.1in \sum_{a\in\mathcal{A}} \alpha_a \Delta J_{a}^{max} (TM_{max}\frac{\pi^2}{3} + \Psi(\frac{(\zeta_a^u-\mu^*)^2}{2\beta}) ),
		\end{eqnarray*}
		where (a) follows from the fact that minimal learning and maximal possible regret per timeslot maximize the overall regret, and (b) is a direct application of Lemma \ref{lemma:012}.
	\end{enumerate}
\end{proof}

\noindent Theorem \ref{thm:regret_bnd} implies that the regret under UCB-Deadline is bounded if transmission is feasible, i.e., $\mu^* \geq \zeta$ where $\zeta$ is the critical point. This is an interesting result since in most exploration-exploitation problems, the regret is logarithmic \cite{Bubeck2012, Tekin}.

In the following theorem, we will state that UCB-Deadline is order-optimal in all cases.

\begin{theorem}[Optimality of UCB-Deadline]
For the learning problem, UCB-Deadline with parameter $\beta \geq 4$ is order optimal, i.e., no other admissible learning algorithm can achieve better than $\Theta(\log N)$ regret if $\mu^*<\zeta$ and $\Theta(1)$ regret if $\mu^*\geq \zeta$.
\end{theorem}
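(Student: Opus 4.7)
The upper bounds required for order-optimality are already supplied by Theorem~\ref{thm:regret_bnd}, so it suffices to prove matching asymptotic lower bounds on $\bar R_N$ for any admissible policy. The case $\mu^* \geq \zeta$ is immediate: by (\ref{eqn:regret}), $\bar R_N \geq 0$, and UCB-Deadline already achieves a finite, $N$-independent bound, so no admissible policy can improve on the $\Theta(1)$ rate in an order-of-growth sense. The substantive work is the $\Omega(\log N)$ lower bound in the regime $\mu^* < \zeta$, which I would establish via a Lai--Robbins style change-of-measure argument adapted to our cost-aware, deadline-constrained setting.

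The first step is to reduce regret to a count of active frames. Under $\mu^* < \zeta$, Proposition~\ref{prop:cp} says the genie-aided optimum is the idle choice $(\mathbf{0},\mathbf{0})$ in every frame with expected per-frame revenue $-\lambda a$. Since the action set is finite (Proposition~\ref{prop:bdd}) and every active decision yields strictly less expected revenue than $-\lambda a$ for $\mu^* < \zeta$, there is a uniform suboptimality gap $\Delta = \Delta(\mu^*) > 0$, and so
\begin{equation*}
\bar R_N \;\geq\; \Delta \cdot \mathbb{E}_{\mu^*}[N - Z_0(N)],
\end{equation*}
with $Z_0(N)$ as in Lemma~\ref{lemma:01}. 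It then suffices to show $\mathbb{E}_{\mu^*}[N - Z_0(N)] = \Omega(\log N)$ for every admissible policy with sub-polynomial regret across all parameters.

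The core of the argument is the change-of-measure. Fix an alternative $\mu' \in (\zeta,1)$; under $\mu'$ the optimal policy activates channels whenever arrivals are positive and yields a strictly positive per-frame gap over the idle action, so a uniformly good policy must have $\mathbb{E}_{\mu'}[N - Z_0(N)] = N(1-o(1))$. Consider the event $E_N = \{ N - Z_0(N) \leq (1-\varepsilon) \log N \,/\, D(\mu^* \| \mu') \}$: if the candidate policy violated the $\Omega(\log N)$ bound under $\mu^*$ then $\mathbb{P}_{\mu^*}(E_N) \to 1$, while $\mathbb{P}_{\mu'}(E_N) \to 0$ necessarily, since being idle on most frames would imply linear regret under $\mu'$. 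The data-processing inequality together with the chain rule for KL divergence on the observed channel outcomes yields
\begin{equation*}
\mathbb{E}_{\mu^*}[Z(N)] \cdot D(\mu^*\|\mu') \;\geq\; \mathbb{P}_{\mu^*}(E_N) \log \frac{\mathbb{P}_{\mu^*}(E_N)}{\mathbb{P}_{\mu'}(E_N)} - \log 2,
\end{equation*}
and bounding $\mathbb{P}_{\mu'}(E_N)$ by Markov's inequality applied to the regret of the policy under $\mu'$ contradicts the supposed violation. Optimizing over $\mu' \in (\zeta,1)$ produces the tightest leading constant.

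The main obstacle is justifying the tensorization of KL divergence in the presence of the multi-stage dynamic-programming control inside each frame and the random arrivals $A(n)$. I would handle this by conditioning on the arrival sequence (which is independent of the channel process under both hypotheses) and applying the chain rule of KL to the sequence of Bernoulli channel observations; the history-dependent activation pattern $(m_t,x_t)$ is predictable with respect to the observation filtration, so the log-likelihood ratio collapses into a sum of $Z(N)$ iid Bernoulli log-ratios whose expectation under $\mathbb{P}_{\mu^*}$ is exactly $\mathbb{E}_{\mu^*}[Z(N)] \cdot D(\mu^* \| \mu')$ by Wald's identity. A secondary subtlety, uniformity of the suboptimality gap $\Delta$ in the first step, follows from the finite action set and the continuity of $J(a, \cdot)$ in the decision, yielding $\Delta$ as the positive minimum across finitely many suboptimal choices.
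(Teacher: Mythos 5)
Your argument is correct in substance, but it takes a much longer road than the paper, which disposes of the lower bound by reduction: for $\mu^*<\zeta$ the problem collapses to a two-armed stochastic bandit (the idle decision $(\mathbf{0},\mathbf{0})$ versus any activating decision, the latter yielding no-cost-free information), and the paper simply invokes the classical Lai--Robbins $\Omega(\log N)$ lower bound for that setting; the $\mu^*\geq\zeta$ case is dismissed in one line, as you also do. What you have written is essentially the proof of that classical result, carried out from scratch in the present model: the reduction of regret to $\Delta\cdot\mathbb{E}_{\mu^*}[N-Z_0(N)]$ via a uniform gap over the finite action set, the change of measure to $\mu'>\zeta$, the Bretagnolle--Huber/data-processing step, and the Wald-identity tensorization of the KL divergence over the predictable activation pattern. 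This buys two things the paper's one-paragraph reduction glosses over: (i) you make explicit that the lower bound only applies to \emph{uniformly good} policies (otherwise the always-idle policy has zero regret under $\mu^*<\zeta$, and the theorem as literally stated would be false), and (ii) you handle the fact that an ``active'' pull delivers a random number $M(n)\in\{1,\dots,TM_{max}\}$ of Bernoulli samples rather than exactly one, so the instance is not \emph{literally} a classical two-armed bandit; the conversion between $\mathbb{E}[Z(N)]$ and $\mathbb{E}[N-Z_0(N)]$ costs only the constant factor $TM_{max}$ and does not affect the order. The one point you should state rather than leave implicit is exactly that last conversion (your KL bound controls $\mathbb{E}_{\mu^*}[Z(N)]$, while the regret decomposition is in terms of active frames), and you should verify, via Lemma~\ref{lemma:conn} and the strict inequality defining the idle region, that the per-decision gap is strictly positive for every $\mu^*<\zeta$, not merely for $\mu^*<d/(1+\lambda)$. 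With those two sentences added, your proof is a complete and more self-contained alternative to the paper's citation-based argument.
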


\begin{proof}
It is clear that $\Theta(1)$ is the best any policy can achieve if $\mu^*\geq \zeta$. If $\mu^*<\zeta$, the optimal policy turns out to be $(m_t,x_t)=(0,0),~\forall t$. This case is a straightforward instance of a classical stochastic multi-armed bandit scenario with two arms: $(m_t,x_t)=(0,0),~\forall t$ corresponds to pulling a hypothetical arm with higher yield, and any other decision corresponds to pulling a suboptimal arm. Since no channel is activated if $(m_t,x_t)=(0,0),~\forall t$, it is clearly a two-arm classical MAB problem. It is well-known that the regret is logarithmic in such cases, hence UCB-Deadline is order optimal in all cases.
\end{proof}

\section{Numerical Results}
\label{sec:num_res}
In this section, we will provide simulation results for performance analysis in a variety of scenarios. As a performance benchmark, we will utilize a Bayesian learning policy based on Thompson Sampling. We first introduce the learning policy.

In problems that involve exploration-and-exploitation tradeoff, Thompson Sampling provides effective solutions that reinforce learning through randomization \cite{Gopalan}. In the following, we propose an algorithmic prescription to the learning problem at hand based on Thompson Sampling, which is abbreviated as TS-Deadline.

\begin{definition}[TS-Deadline] Let $Beta(\theta_0, \theta_1)$ denote the beta distribution with parameters $\theta_i > 0$ for $i=0,1$ whose probability density function is given by $f(x;\theta_0,\theta_1)=\frac{\Gamma(\theta_0+\theta_1)}{\Gamma(\theta_0)\Gamma(\theta_1)}(1-x)^{\theta_0-1}x^{\theta_1-1}$ \cite{Gopalan}. TS-Deadline is described in Algorithm \ref{alg:TS}.

\begin{algorithm}
\SetKwData{Left}{left}\SetKwData{This}{this}\SetKwData{Up}{up}
\SetKwFunction{Union}{Union}\SetKwFunction{FindCompress}{FindCompress}
\BlankLine

\textbf{Initialization}:
$\theta_0(0)=1,~\theta_1(0)=1$

\For{$n=1,2,\ldots,N$}{
	$\bar{\mu}^{TS}(n)\sim Beta(\theta_0(n-1),\theta_1(n-1))$; \\
	$X_T = A(n)$;\\
	\For{$t=T,T-1,\ldots, 1$}{
		$(m_t,x_t) = (\pi_{A(n)}^{*})^t(\bar{\mu}^{TS}_{Z(n-1)}(n))$;\\
		\eIf{$m_t \geq x_t$} {$X_{t-1} = X_t-x_t$;} {$X_{t-1}=X_t$;} 
		}

		$\theta_k(n)=\theta_k(n-1)+\sum_{t=1}^T\sum\limits_{i=1}^{m_t}\mathbb{I}_{\{C_i^n(t)=k\}},~k=0,1$.
}
\caption{TS-Deadline}
\label{alg:TS}
\end{algorithm}

\end{definition}

\subsection{Regret Investigation in Extreme Cases}
We first analyze the performance of UCB-Deadline and compare with TS-Deadline for the delay-tolerant and delay-intolerant systems investigated in Section \ref{sec:prop}.

\subsubsection{Performance in Delay-Tolerant Scenario}
We analyze the performance of the learning algorithms in the delay-tolerant system for which we characterized the optimal rate allocation policy in Section \ref{sec:prop}. For $A(n) = 1,~\forall n$, $d = 0.25$, $T=4$, $\lambda = 1$, recall that the optimal rate allocation algorithm is illustrated in Figure \ref{fig:dl_tol}. For this specific example, we choose two $\mu^*$ values, which are below and above the critical point, and analyze the performances of UCB-Deadline and TS-Deadline.

First, we consider $\mu^* = 0.05$, which is below the critical point as it can be seen in Figure \ref{fig:dl_tol}. For this case, $\mathbf{m} = (0,0,0,0)$ and $J_T(1) = -1$ given the true mean $\mu^*$. The regrets and throughputs under UCB-Deadline and TS-Deadline are given in Figure \ref{fig:dl_tol_inf_r}-\ref{fig:dl_tol_inf_thru}.


\begin{figure}
\centering
 \includegraphics[width=.7\linewidth]{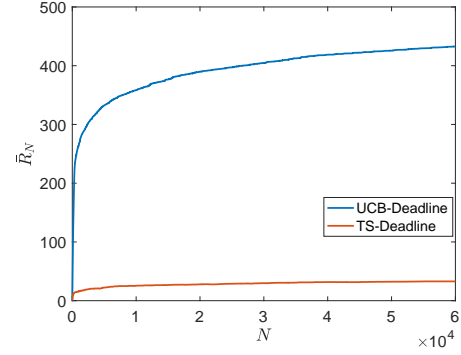}
 \caption{Regrets achieved by UCB-Deadline and TS-Deadline for the case $A(n) = 1,~\forall n$, $d = 0.25$, $T=4$, $\lambda = 1$ and $\mu^* = 0.05$. Both algorithms achieve logarithmic regret, and TS-Deadline provides faster convergence compared to UCB-Deadline up to a coefficient.}
 \label{fig:dl_tol_inf_r}
\end{figure}

\noindent By Theorem \ref{thm:regret_bnd}, the upper bound for the regret under UCB-Deadline is logarithmic over time, consistent with these simulation results. It is observed that TS-Deadline also has an increasing regret, but it achieves lower regret than UCB-Deadline in this case.
\begin{figure}
\centering
 \includegraphics[width=.7\linewidth]{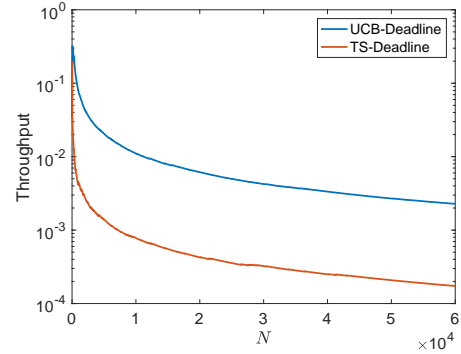} 
\caption{Throughputs achieved by UCB-Deadline and TS-Deadline for the case $A(n) = 1,~\forall n$, $d = 0.25$, $T=4$, $\lambda = 1$ and $\mu^* = 0.05$. Both algorithms achieve logarithmic regret, and TS-Deadline provides faster convergence compared to UCB-Deadline up to a coefficient.}
\label{fig:dl_tol_inf_thru}
\end{figure}

\noindent Since $\mu^* < \zeta$ in this case, it is optimal to stay idle, which implies zero throughput. From Figure \ref{fig:dl_tol_inf_thru}, we observe that throughput under both algorithms decay, the decay rate is higher under TS-Deadline.

In order to observe the behavior of the learning policies above the critical point, we consider $\mu^* = 0.7$ in the same delay-tolerant setting. The regret performances of UCB-Deadline and TS-Deadline are given in Figure \ref{fig:dl_tol_f_r}.

\begin{figure}
\centering
 \includegraphics[width=.8\linewidth]{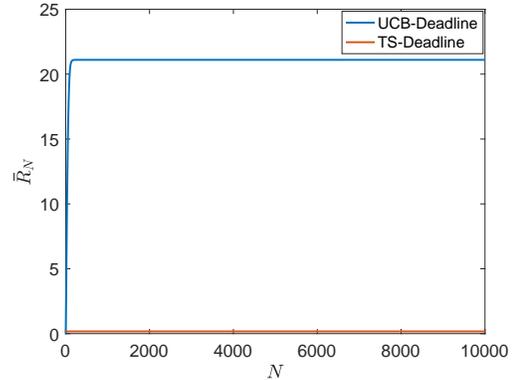} 
\caption{The regret performances of UCB-Deadline and TS-Deadline for the case $A(n) = 1,~\forall n$, $d = 0.25$, $T=4$, $\lambda = 1$ and $\mu^* = 0.7$. Both algorithms achieve bounded regret.}
\label{fig:dl_tol_f_r}
\end{figure}

\noindent From Figure \ref{fig:dl_tol_f_r}, it is observed that the regret stays bounded, which is foreseen by Theorem \ref{thm:regret_bnd}. 

 \begin{figure}[htb] 
\centering
\includegraphics[width=.7\linewidth]{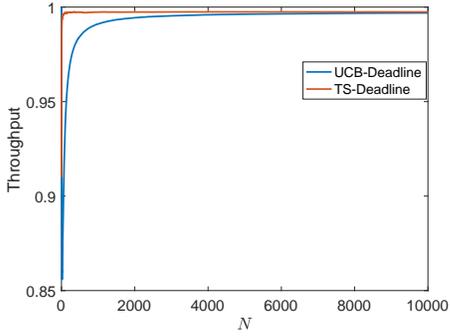}
\caption{Throughputs under UCB-Deadline and TS-Deadline for the case $A(n) = 1,~\forall n$, $d = 0.25$, $T=4$, $\lambda = 1$ and $\mu^* = 0.7$. Throughput converges to 1 in both cases.}
\label{fig:dl_tol_f_thru}
\end{figure}

\noindent Throughput under both algorithms converge to $1$. TS-Deadline provides faster convergence up to a coefficient in this case as well.

\subsection{Performance in Delay-Intolerant Scenario}
In this subsection, we analyze the performance of UCB-Deadline and TS-Deadline in a delay-intolerant scenario. In these simulations, we assume that $T=1$, $\lambda = 0$, $d=0.2$ and $A_max=6$. The arrival process is chosen as an iid uniform distribution, which has the following probability mass function:
\begin{equation}
    \mathbb{P}(A(n) = k) = 
\begin{cases}
    \frac{1}{A_{max}+1},& \text{if } 0 \leq k \leq A_{max} \\
    0   , & \text{otherwise}
\end{cases}
\label{ex:poisson}
\end{equation}

\noindent where $A_{max}$ is the maximum number of arrivals in a frame.

Performance results for $\mu^*=0.05$ and $d=0.2$ are illustrated in Figure \ref{fig:R0_15} with the same truncated Poisson distribution for the arrival process. Note that $\mu^* < \zeta$ in this case, and therefore channel usage is infeasible for any queue-length. By Theorem \ref{thm:regret_bnd}, the upper bound for the regret under UCB-Deadline is logarithmic over time, consistent with the simulation results. It is observed that TS-Deadline also has an increasing regret, but it achieves significantly lower regret than UCB-Deadline in this case as well.

 \begin{figure}[htb] 
\centering
\includegraphics[width=.7\linewidth]{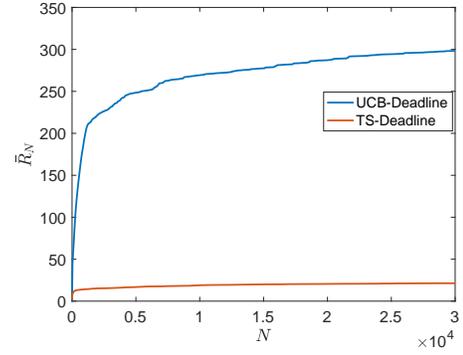}
\caption{The regret performances of UCB-Deadline and TS-Deadline for the case $A_{max} = 6$, $d = 0.25$, $T=1$, $\lambda = 1$ and $\mu^* = 0.05$. Both algorithms have logarithmic regret.}
\label{fig:R0_15}
\end{figure}

 \begin{figure}[htb] 
\centering
\includegraphics[width=.7\linewidth]{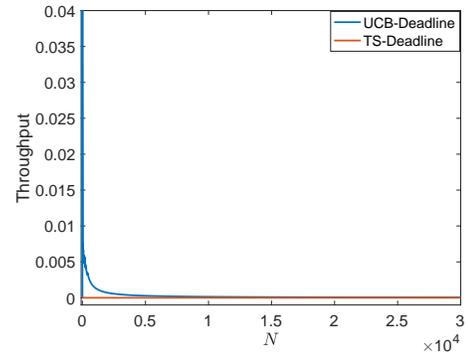}
\caption{The throughput performances of UCB-Deadline and TS-Deadline for the case $A_{max} = 6$, $d = 0.25$, $T=1$, $\lambda = 1$ and $\mu^* = 0.05$. Both algorithms have logarithmic regret.}
\label{fig:R0_15_thru}
\end{figure}


For $\mu^*=0.81$ and $d=0.25$, simulation results under UCB-Deadline and TS-Deadline are provided in Figure \ref{fig:R0_52}. The arrival distribution $\{\alpha_a\}_{a\in\mathcal{A}}$ is chosen as a truncated Poisson distribution with maximum element $A_{max} = 6$.

 \begin{figure}[htb] 
\centering
\includegraphics[width=.7\linewidth]{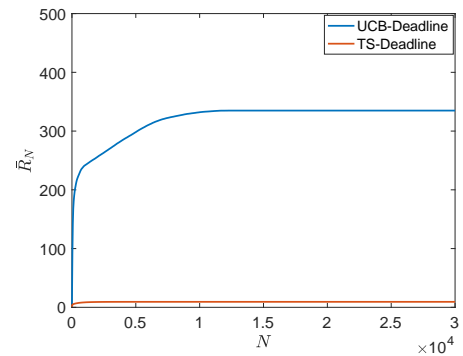}
\caption{The regret and throughput performances of UCB-Deadline and TS-Deadline for the case $A_{max} = 6$, $d = 0.25$, $T=1$, $\lambda = 1$ and $\mu^* = 0.81$. Both algorithms achieve bounded regret.}
\label{fig:R0_52}
\end{figure}

 \begin{figure}[htb] 
\centering
\includegraphics[width=.7\linewidth]{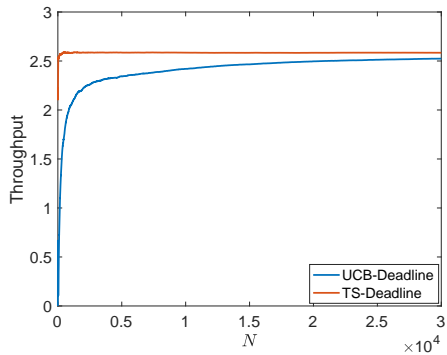}
\caption{The regret and throughput performances of UCB-Deadline and TS-Deadline for the case $A_{max} = 6$, $d = 0.25$, $T=1$, $\lambda = 1$ and $\mu^* = 0.81$. Both algorithms achieve bounded regret.}
\label{fig:R0_52_thru}
\end{figure}

\noindent Since $\mu^* > \zeta$ in this case, the regret is bounded by Theorem \ref{thm:regret_bnd}, which is verified by Figure \ref{fig:R0_52}. Also, it is noteworthy that TS-Deadline achieves smaller regret than UCB-Deadline in this case.

In both settings, we observed that TS-Deadline has a better regret performance compared to UCB-Deadline up to a coefficient. In the following, we present an example in which UCB-Deadline performs significantly better than TS-Deadline.

\subsection{Disadvantage of TS-Deadline}
As we saw in the previous cases, TS-Deadline provides lower regret than UCB-Deadline as in many other exploration-exploitation problems \cite{Gopalan} due to its fast convergence rate. In this subsection, we will provide an interesting case where UCB-Deadline outperforms TS-Deadline.

Consider a delay-intolerant case where $A(n) = 2$, $T=1$, $\lambda = 0$ and $d = 0.2$, and the number of channels is limited as $m_1\leq 2$. In this specific case, the performance of each algorithm for a horizon $N=10000$ is illustrated in Figure \ref{fig:comp2}.

 \begin{figure}[htb] 
\centering
\includegraphics[width=.7\linewidth]{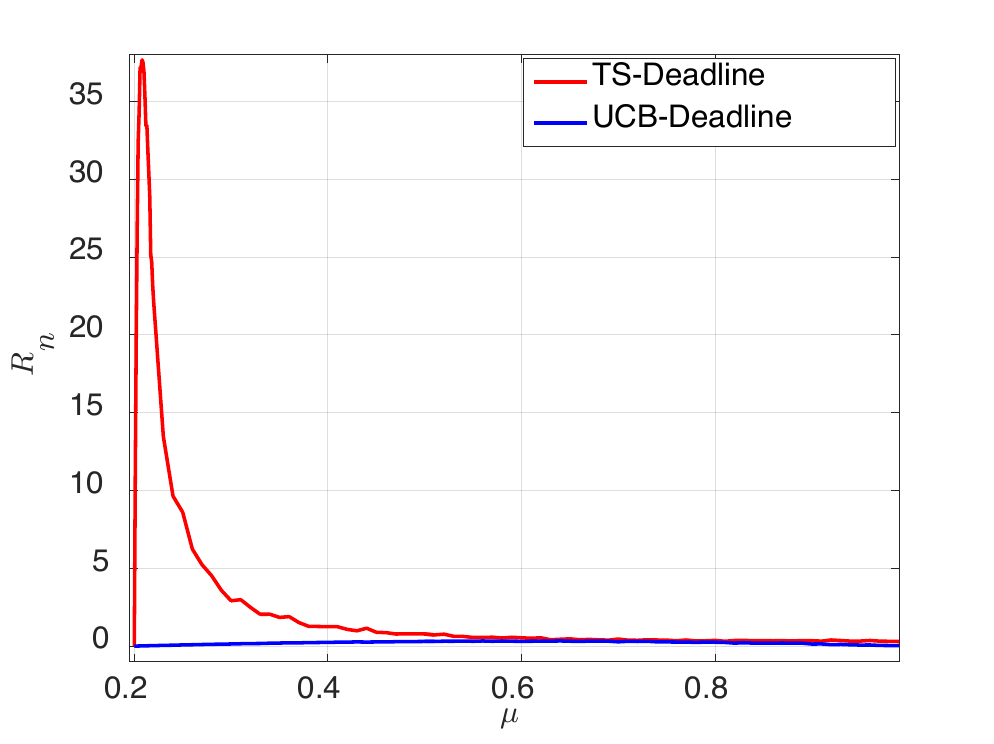}
\caption{The regret performances of UCB-Deadline and TS-Deadline for the case $A(n) = 2,~\forall n$, $d = 0.2$, $T=1$, $\lambda = 0$ and $N=10000$. UCB-Deadline outperforms TS-Deadline if $\mu^*>d$.}
\label{fig:comp2}
\end{figure}

\noindent In this example, we observe that if $\mu^*>d$, then UCB-Deadline outperforms TS-Deadline significantly. This is because UCB-Deadline has a positive bias which provides bigger advantage against TS-Deadline, which has a two-way bias due to the randomization when $\mu^*$ is slightly above the critical point. This particular structure enables UCB-Deadline to outperform TS-Deadline in this specific example.

\section{Conclusion}
In this paper, we investigated the channel allocation problem in a wireless network under a deadline-constrained traffic when the channel statistics and channel state information are unknown. We first identified the optimal rate allocation policy assuming that the channel statistics are known by the controller, and analyzed its important characteristics. Then, we proposed an index-based learning algorithm named UCB-Deadline. We proved that the regret under UCB-Deadline is bounded in the likely case that channel use is feasible, and logarithmic otherwise. This is an interesting result as the regret is logarithmic in most MAB problems.

It is assumed that there is a single class of independent and statistically symmetric channels in this work. UCB-Deadline is proved to achieve a bounded regret by incorporating the number of pending packets and utilizing the knowledge of statistical symmetry of the channels. In an extension of this setting where there are multiple classes of statistically symmetric channels, a similar exploitation of statistical symmetry may provide significant performance improvements. As a future work, we would like to investigate the learning problem in this extended setting.

As a performance benchmark, we introduced a Bayesian learning policy named TS-Deadline, and investigated its performance numerically. We observed that it achieves lower regret than its UCB counterpart in some cases, but there exist cases where UCB-Deadline outperforms TS-Deadline, i.e., TS-Deadline is not uniformly better than UCB-Deadline.

On the side of the service, an interesting extension of this work might be the learning problem where certain QoS requirements such as delivery ratio and service regularity must be met.

\appendix

\subsection{Proof of Proposition \ref{prop:cp}}
\begin{proof}
The proof stems from the following lemmas.
\begin{lemma}
\label{lemma:cont}
	If $m_1 = 0$, then $m_t = 0$ for all $t > 0$.
\end{lemma}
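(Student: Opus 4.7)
My plan is an induction on the stage index $s$, exploiting the Bellman-Ford recursion in (\ref{eqn:dp}) together with the linear terminal cost $J_0(X)=-\lambda X$. The observation I will lean on is that $(m,x)=(0,0)$ is always feasible and satisfies $P_{0,0}(\mu)=1$ while leaving the state unchanged, so substituting it into (\ref{eqn:dp}) shows $J_s(X)\ge J_{s-1}(X)$ for every $s\ge 1$ and every $X$, with equality precisely when $(0,0)$ is optimal at stage $s$ for state $X$. Using $J_0(X-x)-J_0(X)=\lambda x$, the stage-$1$ recursion simplifies to
\begin{equation*}
J_1(X) = \max_{m,\,x\le X}\bigl\{-dm + P_{m,x}(\mu)(1+\lambda)x\bigr\} + J_0(X),
\end{equation*}
so the hypothesis $m_1=0$ is equivalent to the bracketed maximum vanishing at the realized state $X=a:=A(n)$.

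The base case is to upgrade this to every $Y\le a$: since shrinking the feasibility set $\{x\le X\}$ cannot increase the maximum and $(0,0)$ stays feasible, the bracket is also zero for all $Y\in\{0,\ldots,a\}$, giving $J_1(Y)=J_0(Y)$ on that range. For the inductive step, I will assume $J_t=J_0$ on $\{0,\ldots,a\}$ for every $t\le s$; substituting $J_s=J_0$ into the recursion at stage $s+1$ and again using the linear-cost simplification reproduces \emph{exactly} the stage-$1$ maximization above, whose maximum is zero and is attained at $(0,0)$ by the base case. Hence $m_{s+1}=0$ is optimal and $J_{s+1}=J_0$ on $\{0,\ldots,a\}$, closing the induction. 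Because an always-idle policy keeps the state fixed at $a$, the realized sequence satisfies $m_t=0$ for every $t\in\{1,\ldots,T\}$.

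The main subtlety is not algebraic but bookkeeping: the induction must carry $J_t=J_0$ on the entire reachable range $\{0,\ldots,a\}$ rather than only at the realized state, because the recursion evaluates $J_{s-1}$ at $X-x$ for arbitrary $x\le X$. The monotonicity of the stage-$1$ maximum in $X$ (smaller feasibility set, same $(0,0)$ attainer) is precisely what lets the single input "$m_1=0$ at state $a$" bootstrap into the stronger statement needed to drive the induction.
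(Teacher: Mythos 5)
Your proof is correct and follows essentially the same route as the paper's: reduce the stage-$1$ recursion to $J_1(X)=J_0(X)+\max_{m,x\le X}\{-dm+xP_{m,x}(\mu)(1+\lambda)\}$, observe that $m_1=0$ forces the bracket to vanish so $J_1=J_0$, and induct since each subsequent stage then reproduces the same maximization. Your only addition is making explicit the bookkeeping (carrying $J_t=J_0$ on the whole reachable range $\{0,\dots,a\}$) that the paper's "identical situation arises in the next steps" glosses over.
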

\begin{proof} From (\ref{eqn:dp}), we observe that for any $X\in\mathcal{A}$,
	\begin{equation}
	J_1(X) = -\lambda X+\max_{x\leq X, m} \{ -dm+xP_{m,x}(\mu^*)(1+\lambda)\}
	\end{equation}
	\noindent holds. If $m_1 = 0$, it implies that $-dm+xP_{m,x}(\mu^*)(1+\lambda)<0$ for all $m>0$, and $J_1(X) = J_0(X) = -\lambda X$. Identical situation arises in the next steps and the proof follows by induction.
\end{proof}

\begin{lemma}
\label{lemma:idle}
	If $\mu^* < \frac{d}{1+\lambda}$, then $m_t = 0$ for all $t$ and $X$.
\end{lemma}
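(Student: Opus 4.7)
The plan is to invoke Lemma~\ref{lemma:cont} so that it suffices to establish $m_1 = 0$, i.e., that the decision is idle at the stage with one time-slot remaining. From the Bellman recursion at $s=1$ with terminal cost $J_0(X) = -\lambda X$, the stage-1 value collapses to
\begin{equation*}
J_1(X) = -\lambda X + \max_{x\leq X,\, m}\bigl\{-dm + x P_{m,x}(\mu^*)(1+\lambda)\bigr\},
\end{equation*}
so the optimal choice is $m=0$ exactly when the inner maximum is nonpositive for every $m\geq 1$ and $1\leq x\leq \min(m,X)$.

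The key step is to upper-bound $x P_{m,x}(\mu^*)$ by $m\mu^*$. Since $P_{m,x}(\mu^*) = \mathbb{P}(S_m \geq x)$ with $S_m \sim \mathrm{Binomial}(m,\mu^*)$, Markov's inequality yields $x\,\mathbb{P}(S_m \geq x) \leq \mathbb{E}[S_m] = m\mu^*$. Plugging this in gives
\begin{equation*}
-dm + x P_{m,x}(\mu^*)(1+\lambda) \;\leq\; m\bigl[-d + \mu^*(1+\lambda)\bigr],
\end{equation*}
which is strictly negative for every $m\geq 1$ under the hypothesis $\mu^* < d/(1+\lambda)$.

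Hence the maximum is attained at $m_1=0$ (with $x_1=0$), and Lemma~\ref{lemma:cont} propagates this to $m_t=0$ for every $t$ and every state $X\in\mathcal{A}$. I do not foresee a real obstacle here; the only subtlety is the clean one-line Markov bound on $xP_{m,x}(\mu^*)$, which exactly matches the threshold $\mu^* < d/(1+\lambda)$ and reveals why this specific critical value appears (and therefore why the lower endpoint in Proposition~\ref{prop:cp} is $d/(1+\lambda)$).
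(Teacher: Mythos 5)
Your proof is correct and follows essentially the same route as the paper's: the same Markov-inequality bound $xP_{m,x}(\mu^*)\leq m\mu^*$ applied inside the stage-$1$ Bellman recursion, followed by Lemma~\ref{lemma:cont} to propagate $m_1=0$ to all stages. The only cosmetic difference is that you make the strict negativity of $-dm+m\mu^*(1+\lambda)$ for $m\geq 1$ explicit, which the paper leaves implicit.
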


\begin{proof}
	If $X=0$, then the result trivially holds. If $X>0$, Markov inequality provides the following result:
	\begin{equation}
	P_{m, x} (\mu^*)\leq \frac{m\mu^*}{x}
	\end{equation}
	
	\noindent for any $0<x\leq X$. Plugging this into the Bellman-Ford recursion at stage $s=1$,
	\begin{equation}
	J_0(X) \leq J_1(X) \leq J_0(X)+\max_{x\leq X, m} \{-dm+x\frac{m\mu^*}{x}(1+\lambda)\}
	\end{equation}
	
	\noindent which implies that $m_1 = 0$ if $\mu^*<\frac{d}{1+\lambda}$. By Lemma \ref{lemma:cont}, $m_t = 0$ for all $t$.
\end{proof}

\begin{lemma}
\label{lemma:nonidle}
If $\mu^* \geq \frac{2d}{1+\lambda}$, then $m_1 > 0$.
\end{lemma}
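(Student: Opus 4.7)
The plan is to exhibit, at the final stage of the Bellman recursion, one specific feasible control whose marginal value is strictly positive; this will force the optimizer at stage $s=1$ to pick $m_1 > 0$.

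First I would specialize (\ref{eqn:dp}) to $s=1$ using the terminal cost $J_0(X) = -\lambda X$. The two $J_0$ terms collapse into a common $-\lambda X_1$ baseline, so the recursion simplifies to
\begin{equation*}
J_1(X_1) = -\lambda X_1 + \max_{x \leq X_1,\, m} \bigl\{ -dm + (1+\lambda)\, x\, P_{m,x}(\mu^*) \bigr\}.
\end{equation*}
The ``do nothing'' control $m = x = 0$ makes the bracket equal to $0$, so to conclude $m_1 > 0$ it is enough to exhibit a single feasible pair $(m,x)$ with $m \geq 1$ at which the bracket is strictly positive.

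The natural candidate is the minimal nontrivial code $(m,x) = (1,1)$, which is feasible whenever $X_1 \geq 1$ (if $X_1 = 0$ the claim is vacuous). Since $P_{1,1}(\mu^*) = \mu^*$, the bracket evaluates to $-d + (1+\lambda)\mu^*$, and the hypothesis $\mu^* \geq \frac{2d}{1+\lambda}$ yields the lower bound $-d + (1+\lambda)\mu^* \geq d > 0$. This strictly beats the all-idle value of $0$, so the argmax must satisfy $m_1 \geq 1$, as claimed.

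There is essentially no obstacle: the entire argument reduces to a one-line comparison between a handcrafted feasible code and the all-idle control. The only mild points of care are to work at stage $s=1$ (where $J_0$ is explicit, making the Bellman operator collapse nicely) and to pick the single-channel/single-packet candidate so that $P_{m,x}(\mu^*)$ stays in closed form; no characterization of the true maximizer is needed.
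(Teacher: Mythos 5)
Your proof is correct, and it takes a different (and more elementary) route than the paper's. The paper's one-line argument sets $x=m\mu^*$ and invokes the fact that $m\mu^*$ is (near) the median of the Binomial$(m,\mu^*)$ distribution, so that $P_{m,x}(\mu^*)\geq 1/2$ and the stage-$1$ bracket is at least $-dm+(1+\lambda)\tfrac{m\mu^*}{2}\geq 0$; this requires glossing over the integrality of $x$, the feasibility constraint $x\leq X_1$, and the fact that the resulting bound is only non-strict. Your choice of the minimal code $(m,x)=(1,1)$ sidesteps all of that: $P_{1,1}(\mu^*)=\mu^*$ is exact, feasibility only needs $X_1\geq 1$ (which is the only non-vacuous case), and the bound $-d+(1+\lambda)\mu^*\geq d$ is strict whenever $d>0$. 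The only caveat, shared by both arguments, is the degenerate boundary $d=0$ with $\mu^*=0$, where the hypothesis holds vacuously and the bracket is exactly $0$, so ``$m_1>0$'' holds only up to tie-breaking; for $d>0$ your argument is complete and, if anything, tighter than the paper's.
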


\begin{proof}
Setting $x = m\mu^*$ and using the fact that $x =m \mu^*$ is the median of Binomial distribution provides the result.
\end{proof}

\begin{lemma}
\label{lemma:conn}
Let $\Lambda = \{\mu^*\in[0,1]: \sum_{t=1}^T m_t = 0\}$. Then, $\Lambda$ is a connected set.
\end{lemma}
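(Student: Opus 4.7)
I would show that $\Lambda$ is \emph{downward-closed} in $[0,1]$: if $\mu_1 \in \Lambda$ and $\mu_2 \in [0, \mu_1]$, then $\mu_2 \in \Lambda$. Since $0 \in \Lambda$ trivially (at $\mu^* = 0$ one has $P_{m,x}(0) = 0$ for every $x \ge 1$, so no activation can beat idling), downward-closedness would force $\Lambda$ to be a subinterval of $[0,1]$ anchored at the left endpoint, and therefore connected.

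The enabling step is to convert ``$\mu^* \in \Lambda$'' into a clean family of scalar inequalities. By Lemma \ref{lemma:cont}, $\sum_{t=1}^T m_t = 0$ is equivalent to $m_1 = 0$ at the terminal stage $s=1$, since idling at the last stage propagates backward through the recursion. Substituting the terminal cost $J_0(X) = -\lambda X$ into the Bellman equation (\ref{eqn:dp}) at $s=1$ and simplifying should yield
\[ J_1(X) + \lambda X \;=\; \max_{m \ge 0,\; 0 \le x \le X}\, \bigl\{-dm + (1+\lambda)\, x\, P_{m,x}(\mu^*)\bigr\}, \]
so $(m_1, x_1) = (0, 0)$ attains this maximum if and only if
\[ (1+\lambda)\, x\, P_{m,x}(\mu^*) \;\le\; dm \qquad \text{for every integer pair with } m \ge 1,\ 1 \le x \le A_{max}. \]

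Monotonicity will then close the argument. For each fixed $(m,x)$ with $x \ge 1$, the probability $P_{m,x}(\mu) = \mathbb{P}(S_m \ge x)$ is nondecreasing in $\mu$, via the standard stochastic coupling $C_k = \mathbb{I}\{U_k \le \mu\}$ with $U_k \sim \mathrm{Unif}[0,1]$. Hence the family of inequalities above is preserved as $\mu^*$ decreases, so $\mu_1 \in \Lambda$ together with $\mu_2 \le \mu_1$ implies $\mu_2 \in \Lambda$. The main obstacle I anticipate is the equivalence in the second paragraph, which hinges on correctly unwinding the Bellman recursion with $J_0$ serving as the continuation value at every stage; fortunately Lemma \ref{lemma:cont} absorbs most of the bookkeeping by collapsing the multi-stage question into a single-stage one, after which the reduction to scalar inequalities is routine algebra.
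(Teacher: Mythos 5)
Your proposal is correct and follows essentially the same route as the paper: the paper's proof also establishes that $\Lambda$ is downward-closed by taking $\mu_0\in\Lambda$, reducing membership to the family of one-stage inequalities $-dm + x\,P_{m,x}(\mu_0)(1+\lambda) < 0$ for all $m>0$, and invoking the monotonicity of $P_{m,x}(\mu)$ in $\mu$. Your explicit appeal to Lemma \ref{lemma:cont} and the coupling argument for monotonicity just make precise steps the paper leaves implicit.
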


\begin{proof}
Take a non-zero $\mu_0\in\Lambda$. Then, by definition, for any $m > 0$, we know that $-dm+xP_{m,x}(\mu_0)(1+\lambda) < 0$. For any $\mu < \mu_0$, ${P}_{m,x}(\mu) < P_{m, x}(\mu_0)$ and therefore $-dm+xP_{m,x}(\mu)(1+\lambda) < 0$, which implies that no channel is used and thus $\mu\in\Lambda$.
\end{proof}

Lemma \ref{lemma:conn} states that the idle region is a continuum in the unit interval. With this, Lemma \ref{lemma:idle} and  Lemma \ref{lemma:nonidle} together establish subsets of idle and non-idle regions, respectively. Continuity of $P_{m, x}(\mu)$ provides the result.
\end{proof}

\subsection{Proof of Lemma \ref{lemma:01}}
\label{prf:01}
\begin{enumerate}
\item If $\mu \geq d$,
	\begin{eqnarray*}
	Z_0(n)&=&\sum_{n=1}^N\mathbb{I}_{\{ \bar{\mu}_{Z(n-1)}(n) < d \}} \\
	&\leq& \sum_{n=1}^N\mathbb{I}_{\{ \min\{ \bar{\mu}_{s}(n) < d :~1 \leq s \leq TM_{max}\cdot n\}\}} \\
	&\leq& \sum_{n=1}^N\sum_{s=1}^{TM_{max}\cdot n}\mathbb{I}_{\{ \bar{\mu}_{s}(n) < d \}} \\
	&\leq& \sum_{n=1}^N\sum_{s=1}^{TM_{max}\cdot n}\mathbb{I}_{\{ \bar{\mu}_{s}(n) < \mu^* \}}.
\end{eqnarray*}

by following a similar path as \cite{Bubeck2012}. Taking the expectation and using Chernoff-Hoeffding Bound, the following is obtained if $\beta \geq 3$:
\begin{eqnarray*}
	\mathbb{E}[Z_0(n)] &\leq& TM_{max}\sum_{n=1}^N n^{1-\beta} \\
	&\leq& TM_{max}\sum_{n=1}^\infty n^{1-\beta} \leq TM_{max}\frac{\pi^2}{6}.
\end{eqnarray*}

\item The following claim is necessary for proving this part.
\begin{claim}
If $\sum_t m_t > 0$, then at least one of the following must hold:
\begin{enumerate}
	\item $\hat{\mu}_{Z(n-1)} \geq \mu^*+c_{n, Z(n-1)}$
	\item $Z(n-1) < \frac{2\beta\log N}{(d-\mu^*)^2}$.
\end{enumerate}
\end{claim}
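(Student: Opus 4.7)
My plan is to prove the claim by contrapositive: assume both (a) and (b) fail, and show that the policy must choose $\sum_t m_t = 0$. This is the standard UCB trichotomy argument, decomposing the event $\{\sum_t m_t > 0\}$ into an upward deviation of the empirical mean, insufficient sampling, or an implied scenario that cannot occur.

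First, from the negation of (b) I have $Z(n-1) \geq \frac{2\beta \log N}{(d-\mu^*)^2}$. Substituting into the definition $c_{n,s} = \sqrt{\beta \log n/(2s)}$ and using $n \leq N$ (so $\log n \leq \log N$) gives
\[
c_{n,Z(n-1)} \;\leq\; \sqrt{\frac{\beta \log N}{2}\cdot\frac{(d-\mu^*)^2}{2\beta \log N}} \;=\; \frac{d-\mu^*}{2}.
\]
Next, from the negation of (a) I have $\hat{\mu}_{Z(n-1)} < \mu^* + c_{n,Z(n-1)}$. Combining these two bounds,
\[
\bar{\mu}_{Z(n-1)}(n) \;=\; \hat{\mu}_{Z(n-1)} + c_{n,Z(n-1)} \;<\; \mu^* + 2c_{n,Z(n-1)} \;\leq\; \mu^* + (d-\mu^*) \;=\; d.
\]

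Finally, in the binary-decision setting of Lemma \ref{lemma:01}, the controller activates either $\mathbf{m}_a$ channels or no channel at all, with the switch governed by whether the belief exceeds the relevant critical threshold (bounded above by $d$ in this specialization of Proposition \ref{prop:cp}). Since the running UCB belief $\bar{\mu}_{Z(n-1)}(n)$ is strictly below $d$, the rate allocation subroutine invoked by UCB-Deadline declines to activate any channel, so $\sum_t m_t = 0$, which contradicts the hypothesis of the claim. Hence at least one of (a) or (b) must hold.

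I do not expect any serious obstacle: the two inequalities plug together mechanically, and the only subtlety is the last step, where one must invoke the binary decision structure of Lemma \ref{lemma:01} together with Proposition \ref{prop:cp} to translate the belief bound $\bar{\mu}_{Z(n-1)}(n) < d$ into the conclusion $\sum_t m_t = 0$. Once this claim is in hand, the usual peeling $\{\sum_t m_t > 0\} \subseteq \{\text{event (a)}\} \cup \{\text{event (b)}\}$, combined with Chernoff--Hoeffding for event (a) and a deterministic count for event (b), delivers the $\frac{2\beta\log N}{(\zeta-\mu^*)^2} + TM_{\max}\pi^2/6$ bound announced in part 2 of Lemma \ref{lemma:01}.
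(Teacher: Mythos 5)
Your proposal is correct and follows essentially the same route as the paper: negate both conditions, deduce $c_{n,Z(n-1)} \leq \frac{d-\mu^*}{2}$ from the sample-count bound, conclude that the UCB index $\hat{\mu}_{Z(n-1)} + c_{n,Z(n-1)} < d$, and hence that the rate-allocation subroutine idles, contradicting $\sum_t m_t > 0$. The only difference is cosmetic — you spell out the substitution into $c_{n,s}$ and the appeal to the threshold structure of Proposition \ref{prop:cp} slightly more explicitly than the paper does.
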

\begin{proof}[Proof of Claim 1]
Suppose neither holds. Then,
\begin{eqnarray*}
	\hat{\mu}_{Z(n-1)}+c_{n, Z(n-1)} &<& \mu^*+2c_{n,Z(n-1)} \\
	&\leq& \mu^*+2c_{n, Z(n-1)} \\
	&\leq& \mu^*+(d-\mu^*) = d.
\end{eqnarray*}
Thus, $\pi^*_{A(n)}(n) = (\mathbf{0},\mathbf{0})$.
\end{proof}

Let $l = \ceil*{\frac{2\beta\log N}{(d-\mu^*)^2}}$. Then, by using a similar methodology as \cite{Bubeck2012},
\begin{align*}
N-Z_0(N) &\leq \sum_{n=1}^N \mathbb{I}_{\{\bar{\mu}_{Z(n-1)}(n) > d, Z(n-1) \geq l\}} + l &&\\ 
&= \sum_{n=1}^N\mathbb{I}_{\{ \hat{\mu}_{Z(n-1)} \geq \mu+c_{n,Z(n-1)} \}} +l &&\\ 
&\leq \sum_{n=1}^N \sum_{s=1}^{TM_{max}n} \mathbb{I}_{\{ \hat{\mu}_{Z(n-1)} \geq \mu+c_{n,Z(n-1)} \}}+l,&&
\end{align*}

\noindent where the first line holds with equality iff $Z(n-1)\geq l$ and the second line follows from Claim 1. Taking the expectation and exploiting Chernoff-Hoeffding Bound, the result is obtained.
\end{enumerate}

\subsection{Proof of Lemma \ref{lemma:012}}

\begin{enumerate}
		\item Proof of this part is similar to the proof of the first part of Lemma \ref{lemma:01}.

		\item The following claims play an essential role in the proof.
		
		\begin{claim}
			\label{claim:I012}
			If $m_t=\tilde{m}_t,~\forall t$, then at least one of the following must hold:
			\begin{enumerate}
				\item $\hat{\mu}_{Z(n-1)} > \mu+c_{n, Z(n-1)}$,
				\item $c_{n, Z(n-1)} > \frac{\zeta_a^u-\mu^*}{2}$.
			\end{enumerate}
		\end{claim}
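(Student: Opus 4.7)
The plan is to argue by contrapositive, paralleling the structure of Claim 1 used in the proof of Lemma \ref{lemma:01}. Assume that neither of the two conditions holds, that is,
\[
\hat{\mu}_{Z(n-1)} \leq \mu^* + c_{n,Z(n-1)} \quad\text{and}\quad c_{n,Z(n-1)} \leq \tfrac{\zeta_a^u - \mu^*}{2}.
\]
The goal is then to derive that the UCB-Deadline policy at frame $n$ cannot select $(\tilde{m}_t,\tilde{x}_t)$ for every $t$, contradicting the hypothesis $m_t = \tilde{m}_t$ for all $t$.

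The key step is to chain the two inequalities above through the definition of the UCB index. First I would add the two assumed bounds to obtain
\[
\bar{\mu}_{Z(n-1)}(n) \;=\; \hat{\mu}_{Z(n-1)} + c_{n,Z(n-1)} \;\leq\; \mu^* + 2 c_{n,Z(n-1)} \;\leq\; \mu^* + (\zeta_a^u - \mu^*) \;=\; \zeta_a^u.
\]
This bounds the belief used at frame $n$ strictly by $\zeta_a^u$ (modulo the boundary).

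Next, I would invoke the structure of the optimal policy in (\ref{eqn:lemma012_opt}): whenever the input belief $\mu$ satisfies $\mu \leq \zeta_a^u$, the decision is either $(0,0)$ (when $\mu < \zeta_a^l$) or $(m_t^*,x_t^*)$ (when $\mu \in [\zeta_a^l,\zeta_a^u]$); in particular the decision is never $(\tilde{m}_t,\tilde{x}_t)$, because the latter is reserved for the regime $\mu > \zeta_a^u$. Therefore $\pi^*_{A(n)}(\bar{\mu}_{Z(n-1)}(n)) \neq (\tilde{\mathbf{m}},\tilde{\mathbf{x}})$, contradicting the premise that $m_t = \tilde{m}_t$ for every $t$. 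The contradiction shows that at least one of the two listed conditions must hold.

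I do not expect a serious obstacle here, since the argument is a direct three-line manipulation of the UCB definition combined with the piecewise structure of the optimal policy assumed in Lemma \ref{lemma:012}. The only minor care point is handling the boundary value $\bar{\mu} = \zeta_a^u$; this is resolved exactly as in the proof of the analogous claim in Lemma \ref{lemma:01} by treating the boundary as part of the $[\zeta_a^l,\zeta_a^u]$ regime, which is legitimate since breaking ties differently affects only a measure-zero event and does not change the subsequent counting bound used in Lemma \ref{lemma:012}.
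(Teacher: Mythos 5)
Your proposal is correct and follows essentially the same route as the paper: suppose neither condition holds, chain the two bounds to get $\bar{\mu}_{Z(n-1)}(n)=\hat{\mu}_{Z(n-1)}+c_{n,Z(n-1)}\leq \mu^*+2c_{n,Z(n-1)}\leq \zeta_a^u$, and conclude from the policy structure in (\ref{eqn:lemma012_opt}) that $(\tilde{\mathbf{m}},\tilde{\mathbf{x}})$ cannot be selected. Your explicit handling of the boundary value $\bar{\mu}=\zeta_a^u$ is consistent with the paper, which assigns that value to the middle regime $[\zeta_a^l,\zeta_a^u]$ by definition.
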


		\begin{claim}
			\label{claim:ET0}		
			For any $N\geq 1$, UCB-P with parameter $\beta \geq 4$ provides the following:
			\begin{equation*}
				\mathbb{E}[Z_0^2(N)] \leq TM_{max}\frac{\pi^2}{2}.
			\end{equation*}
		\end{claim}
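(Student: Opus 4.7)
The plan is to decompose $Z_0(N)^2$ into a diagonal contribution and an off-diagonal contribution, control the per-frame probabilities $\mathbb{P}(Y_n=1)$ exactly as in the proof of Lemma \ref{lemma:01}, and then exploit the fact that $\beta\geq 4$ produces one additional power of decay which is precisely what is needed for the cross-term sum to converge. Write $Z_0(N)=\sum_{n=1}^N Y_n$ with $Y_n=\mathbb{I}_{\{\pi^*_{A(n)}(\bar{\mu}_{Z(n-1)}(n))=(\mathbf{0},\mathbf{0})\}}$; since $Y_n\in\{0,1\}$ we have $Y_n^2=Y_n$, hence
\begin{equation*}
\mathbb{E}[Z_0(N)^2]=\mathbb{E}[Z_0(N)]+2\sum_{1\leq n<m\leq N}\mathbb{E}[Y_n Y_m].
\end{equation*}
The first term is already controlled by Claim 1 of Lemma \ref{lemma:012} (whose proof mirrors Lemma \ref{lemma:01} part 1), giving $\mathbb{E}[Z_0(N)]\leq TM_{max}\pi^2/6$.

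For the per-frame probability I would reuse the device from Lemma \ref{lemma:01}: because $\mu^*\geq \zeta_a^l$ and $Z(n-1)\leq TM_{max}(n-1)$, the event $\{Y_n=1\}$ is contained in $\bigcup_{s=1}^{TM_{max}n}\{\bar{\mu}_s(n)<\zeta_a^l\}$, and for every such $s$ the Chernoff-Hoeffding bound yields $\mathbb{P}(\bar{\mu}_s(n)<\zeta_a^l)\leq \mathbb{P}(\hat{\mu}_s-\mu^*<-c_{n,s})\leq n^{-\beta}$. A union bound over $s$ then gives the key estimate
\begin{equation*}
\mathbb{P}(Y_n=1)\leq TM_{max}\,n^{1-\beta}.
\end{equation*}

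For the off-diagonal terms, I would use the elementary inequality $Y_nY_m\leq Y_m$ whenever $n<m$, so that
\begin{equation*}
2\sum_{1\leq n<m\leq N}\mathbb{E}[Y_nY_m]\leq 2\sum_{m=1}^{N}(m-1)\mathbb{P}(Y_m=1)\leq 2TM_{max}\sum_{m=1}^{\infty}m^{2-\beta}.
\end{equation*}
This is the step where the stronger hypothesis $\beta\geq 4$ becomes essential, since the exponent $2-\beta$ must be at most $-2$ for the series to converge; in fact at the worst case $\beta=4$ the sum equals exactly $\pi^2/6$. Combining with the diagonal estimate yields $\mathbb{E}[Z_0(N)^2]\leq TM_{max}(\pi^2/6+2\cdot\pi^2/6)=TM_{max}\pi^2/2$, which matches the claim on the nose.

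The main obstacle in executing this plan is not an algebraic one but a measurability and dominating-envelope one: the index $Z(n-1)$ inside $\bar{\mu}_{Z(n-1)}(n)$ is random and depends on the entire past, so one must be careful to dominate it pathwise by the deterministic quantity $TM_{max}n$ before applying a deterministic union bound over $s$, and then argue that the Hoeffding tail for $\hat{\mu}_s$ holds uniformly in $s$. Once that is handled correctly, the rest is routine summation, and the inequality $\sum n^{1-\beta}\leq\pi^2/6$ for $\beta\geq 4$ (where $n^{1-\beta}\leq n^{-3}\leq n^{-2}$) gives the final bound with the constants matching Claim \ref{claim:ET0} exactly.
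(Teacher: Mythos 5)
Your proposal is correct and follows essentially the same route as the paper: the identical diagonal/off-diagonal decomposition of $Z_0^2(N)$, the same domination of the random index $Z(n-1)$ by $TM_{max}n$ followed by a union bound and Chernoff--Hoeffding to get $\mathbb{P}(Y_n=1)\leq TM_{max}n^{1-\beta}$, and the same use of $Y_nY_m\leq Y_m$ to reduce the cross terms to $2TM_{max}\sum_m m^{2-\beta}\leq TM_{max}\pi^2/3$ when $\beta\geq 4$. The constants and the final bound $TM_{max}\pi^2/2$ match the paper's exactly.
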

		
		\begin{claim}
			\label{claim:P012}
			For any $\epsilon > 0$, UCB-P with $\beta \geq 4$ implies the following:
			\begin{equation*}
				\sum_{t=0}^\infty \bbP\Big(\frac{\log(n+1)}{Z(n)} > \epsilon\Big) \leq \Psi(\epsilon) < \infty
			\end{equation*}
		\end{claim}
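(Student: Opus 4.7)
The plan is to relate the event $\{\log(n+1)/Z(n) > \epsilon\}$ to a tail event on $Z_0(n)$, and then apply Markov's inequality using the second-moment bound from Claim \ref{claim:ET0}. The argument proceeds in four steps.

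First, I would rewrite the event of interest: $\log(n+1)/Z(n) > \epsilon$ is equivalent to $Z(n) < \log(n+1)/\epsilon$. The key structural observation is that every frame either activates zero channels (counted by $Z_0(n)$) or activates at least one channel. Hence $Z(n) \geq n - Z_0(n)$, which gives the inclusion
\begin{equation*}
\{Z(n) < \log(n+1)/\epsilon\} \subseteq \{Z_0(n) > n - \log(n+1)/\epsilon\}.
\end{equation*}

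Second, I would split the summation at the threshold $n_\epsilon = \inf\{n : n - \log(n+1)/\epsilon > 0\}$. For $n < n_\epsilon$ the right-hand side inside the probability is nonpositive, so I bound each probability trivially by $1$, contributing at most $n_\epsilon$ to the sum. For $n \geq n_\epsilon$ the quantity $n - \log(n+1)/\epsilon$ is strictly positive, so Markov's inequality applied to $Z_0^2(n)$ together with Claim \ref{claim:ET0} yields
\begin{equation*}
\bbP\Big(Z_0(n) > n - \tfrac{\log(n+1)}{\epsilon}\Big) \leq \frac{\mathbb{E}[Z_0^2(n)]}{\big(n-\log(n+1)/\epsilon\big)^2} \leq \frac{TM_{max}\pi^2/2}{\big(n-\log(n+1)/\epsilon\big)^2}.
\end{equation*}

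Third, I would combine the two regimes. Summing the trivial bound over $n < n_\epsilon$ and the Markov bound over $n \geq n_\epsilon$ (and enlarging the tail sum to start from $n=1$, which only increases it) gives exactly the definition of $\Psi(\epsilon)$. Finiteness of $\Psi(\epsilon)$ follows from the fact that $\log(n+1)/\epsilon = o(n)$, so for large $n$ the summand behaves like $1/n^2$ and the series converges by comparison.

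The main obstacle is the structural linking of $Z(n)$ to $Z_0(n)$ via the inequality $Z(n) \geq n - Z_0(n)$; once this is in place, the rest is a clean Markov-plus-series argument using the previously established second-moment bound. Care is also needed to ensure the indexing between $n$ and $n_\epsilon$ is consistent so that the summand is well-defined (which is exactly why $n_\epsilon$ was introduced).
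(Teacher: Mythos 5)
Your proposal is correct and follows essentially the same route as the paper: the structural inequality $Z(n) \geq n - Z_0(n)$ to convert the event into a tail event on $Z_0(n)$, Markov's inequality applied to $Z_0^2(n)$ with the second-moment bound from Claim \ref{claim:ET0}, and the split at $n_\epsilon$ to handle the indices where the threshold is nonpositive. Your explicit justification of why each pre-$n_\epsilon$ term is bounded by $1$ and why the tail series converges is slightly more detailed than the paper's, but the argument is the same.
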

		 The proofs for Claim \ref{claim:I012}, Claim \ref{claim:ET0} and Claim \ref{claim:P012} are given at the end of this subsection.

	Let $\tilde{Z}(N) = \sum_{n=1}^N\mathbb{I}_{\{\mathbf{m = \mathbf{\tilde{m}}}\}}$. Using Claim \ref{claim:I012}, $\tilde{Z}(N)$ can be upper bounded as follows:
	\begin{align*}
		\tilde{Z}(N) &\leq \sum_{n=1}^N\mathbb{I}_{\{c_{n, Z(n-1)} > \frac{\zeta-\mu^*}{2}\text{ or } \mh_{Z(n-1)}>\mu^*+c_{n,Z(n-1)}\}} & \\
		&\leq \sum_{n=1}^N\mathbb{I}_{\{c_{n, Z(n-1)} > \frac{\zeta-\mu^*}{2}\}} & \\  &\qquad+\sum_{n=1}^n\mathbb{I}_{\{\mh_{Z(n-1)}>\mu+c_{n,Z(n-1)} \}}\\
		&= \sum_{n=1}^N\mathbb{I}_{\{\frac{\log n}{Z(n-1)} > \frac{(\zeta-\mu^*)^2}{2\beta}\}}&& \\&\qquad+\sum_{n=1}^N\mathbb{I}_{\{\mh_{Z(n-1)}>\mu+c_{n,Z(n-1)} \}}.
	\end{align*}
	Thus, $\mathbb{E}[\tilde{Z}(N)]$ is upper bounded as follows:
	\begin{equation}
		\label{eqn:et2}
		\mathbb{E}[\tilde{Z}(N)] \leq \sum_{n=1}^N \bbP\Big(\frac{\log(n)}{Z(n-1)} > \frac{(\zeta-\mu^*)^2}{2\beta}\Big) + \frac{\pi^2}{3}.
	\end{equation}
	
	The first term on the right-hand side of (\ref{eqn:et2}) is upper bounded by Claim \ref{claim:P012} with $\epsilon = \frac{(\zeta-\mu^*)^2}{2\beta}$. Thus the proof follows.
	
		\end{enumerate}
	
	\begin{proof}[Proof of Claim \ref{claim:I012}]
		Suppose neither holds. Then,
		\begin{eqnarray*}
			\bar{\mu}_{Z(n-1)}+c_{n, Z(n-1)} &\leq& \mu+2\cdot c_{n, Z(n-1)} \\	
			&\leq& \mu^*+2\cdot\frac{\zeta-\mu^*}{2} = \zeta.
		\end{eqnarray*}
		Thus, $\textbf{m} \neq \mathbf{\tilde{m}}$.
	\end{proof}
	
	\begin{proof}[Proof of Claim \ref{claim:ET0}] The decomposition of $Z_0^2(N)$ into diagonal and off-diagonal elements and union bound provide the following upper bound:
		\begin{align*}
			Z_0^2(N) &\leq  \sum_{n=1}^N \mathbb{I}_{\{\bar{\mu}_{Z(n-1)}(n) \leq d\}}+2\sum_{n=1}^{N-1}\sum_{s=n+1}^N \mathbb{I}_{\{\bar{\mu}_{Z(s-1)}(s) \leq d\}}& \\
			 &\leq \sum_{n=1}^N \sum_{r=1}^{TM_{max}n} \mathbb{I}_{\{\bar{\mu}_{r}(n) \leq d\}}+2\sum_{n=1}^{N-1}\sum_{s=n+1}^N \mathbb{I}_{\{\bar{\mu}_{r}(s) \leq d\}}.&
		\end{align*}
		Taking the expectation, and applying Chernoff-Hoeffding Bound,
		\begin{align*}
			\mathbb{E}Z_0^2(N) &\leq TM_{max}\sum_{n=1}^N n^{1-\beta}+2TM_{max}\sum_{n=1}^{N-1}\sum_{s=n+1}^N s^{1-\beta}& \\
			&\leq TM_{max}\sum_{n=1}^N n^{1-\beta}+2TM_{max}\cdot\sum_{n=1}^{N}n^{2-\beta}& \\
			&\leq TM_{max}(\frac{\pi^2}{6}+\frac{\pi^2}{3}) = TM_{max}\frac{\pi^2}{2}.&
		\end{align*}
	\end{proof}		
	
	\begin{proof}[Proof of Claim \ref{claim:P012}]
	Fix $\epsilon > 0$. Note that 
	\begin{eqnarray*}
		Z(N)&=&N-Z_0(N)+Z_2(N) \\ &\geq& N-Z_0(N). 
	\end{eqnarray*}		
	Therefore,
	\begin{equation}
		\label{eqn:ubP012}
		\bbP\Big(\frac{\log(n+1)}{Z(n)} > \epsilon\Big) \leq \bbP\Big(Z_0(n) > n-\frac{\log(n+1)}{\epsilon}\Big).
	\end{equation}
	
	If $t-\frac{\log(n+1)}{\epsilon} > 0$, Markov Inequality applied to the RHS of (\ref{eqn:ubP012}) implies the following:
	\begin{equation}
		\bbP\Big(Z_0(n) > n-\frac{\log(n+1)}{\epsilon}\Big) \leq \frac{\mathbb{E}Z_0^2(n)}{(n-\frac{\log(n+1)}{\epsilon})^2}.
	\end{equation}
	Let $n_\epsilon = \inf\{n:n-\frac{\log(n+1)}{\epsilon}>0\}$. Then, by Claim \ref{claim:ET0},
	
	\begin{align*}
		\sum_{n=0}^\infty \bbP\Big(\frac{\log(n+1)}{Z(n)} > \epsilon\Big) &\leq  \Psi(\epsilon)< \infty.
	\end{align*}

\end{proof}

%

%
%
%
%
%
%
%
%

%
%


\begin{IEEEbiographynophoto}{Semih Cayci} (S'12) received the B.S. degree from Bogazici University in 2010 and M.S. degree from Bilkent University in 2013, both in Electrical and Electronics Engineering. Currently, he is working towards the Ph.D. degree in the Department of Electrical and Computer Engineering of the Ohio State University.  His research interests include machine learning, probability theory and stochastic control.
\end{IEEEbiographynophoto}

\begin{IEEEbiographynophoto}{Atilla Eryilmaz}
(S'00 / M'06 / SM'17 ) received his
M.S. and Ph.D. degrees in Electrical and Computer
Engineering from the University of Illinois at
Urbana-Champaign in 2001 and 2005, respectively.
Between 2005 and 2007, he worked as a Postdoctoral
Associate at the Laboratory for Information and
Decision Systems at the Massachusetts Institute of
Technology. He is currently an Associate Professor
of Electrical and Computer Engineering at The Ohio
State University. Dr. Eryilmaz's research interests include design and analysis for communication
networks, optimal control of stochastic networks, optimization theory,
distributed algorithms, pricing in networked systems, and information theory. He received the NSF-CAREER Award in 2010 and two Lumley Research Awards for Research Excellence in 2010 and 2015. He is a co-author of the 2012 IEEE WiOpt Conference Best Student Paper, the 2016 IEEE Infocom Best Paper, and the 2017 IEEE WiOpt Conference Best Paper Awards. He has served as TPC co-chair of IEEE WiOpt in 2014 and of ACM Mobihoc in 2017, and is an Associate Editor of IEEE/ACM Transactions on Networking and IEEE Transactions on Networks Science and Engineering.
\end{IEEEbiographynophoto}

\end{document}